\begin{document}

\title{KdV Equation for Theta Functions on Non-commutative Tori}
\author{Independent Research for Honors in Mathematics\\
Supervisor: Prof. Emma Previato\\
Author: Wanli Cheng}
\date{October 5,2019}
\maketitle

\theoremstyle{definition} \newtheorem{Def}{Definition}[section]
\theoremstyle{definition} \newtheorem{Thm}{Theorem}[section]
\theoremstyle{definition} \newtheorem{Cor}[Thm]{Corrollary}
\theoremstyle{definition} \newtheorem{Con}[Thm]{Conjecture}
\theoremstyle{definition} \newtheorem{Lem}{Lemma}[section]
\theoremstyle{definition} \newtheorem{Not}{Notation}[section]
\theoremstyle{definition} \newtheorem{Prop}{Proposition}[section]
\theoremstyle{remark} \newtheorem*{Rem}{Remark}
\theoremstyle{remark} \newtheorem{Eg}{Example}[section]

\abstract
In the fields of non-commutative geometry and string theory, quantum tori
appear in different mathematical and physical contexts.
Therefore, quantized theta functions defined on quantum tori are
also studied (Yu. I. Manin, A. Schwartz; note that a comparison between the
two definitions of quantum theta is still an open problem). One important application of
classical theta functions is in soliton theory. Certain soliton equations, including
the KdV equation,  have algebro-geometric solutions that are given by theta
functions (we refer to F. Gesztesy and H. Holden), and as such belong to an
``integrable hierarchy''. While quantized integrability is a very active and
complicated subject, in this work we take a different, na\"{\i}ve approach.
We conduct an experiment: using a definition of differentiation on quantum
tori  (M. Rieffel), we ask whether the quantum theta function satisfies a
non-linear PDE. The experiment in successful on the 2-torus and for the KdV
equation. This opens the way to future investigations, such as the quest for a
compatible hierarchy satisfied by quantum theta, and a consistent definition of
complete integrability.

\newpage
\tableofcontents
\newpage

\section{Introduction}
In this paper, we study the relation between KdV equations \footnote{Here the quantized KdV means the smooth function $u(x, t)$ is replace by a state vector $\psi(x, t)$ takes value in $S(\mathbb{R})$, which is not the case that quantizing the Poisson structure of KdV.}
 and the theta vector $\vartheta$ defined by Schwarz [S], [E] on quantized 2-tori. KdV equations are a family of soliton equations of which the algebro-geometric solutions can be expressed by theta functions. In the classical case, if theta function satisfied the Hirota bilinear relation, then $2\frac{d^2}{dx^2}\log{\vartheta}(\alpha{x}+\gamma{t}+x_0, \tau)$ gives a family of soliton solution of the KdV equation. Similar to the classical case, we found that Schwartz's theta vector satisfies the Hirota bilinear relation and it is possible that we make sense of $2\frac{d^2}{dx^2}\log{\vartheta}(ax+bt+c, \tau)$ gives a family of solutions of KdV equation.\\
 \ \\
In This paper, we first introduce the classical theta function in Section 2.  In Section 3, we review the concepts of $C^*$-algebra which gives the basic setting of non-commutative geometry. Then, in Section 4, we study the quantized tori in the approach of Rieffel [R]. Quantized tori are considered as $C^*$-algebras formed by deformation quantization of ordinary tori. Therefore the definitions and theorems of $C^*$-algebra are used to depict the structure of quantized tori. This section establishes the basic framework of our research. Then in Section 5, following [S] and [E], we construct the holomorphic structure and the simplest theta vectors on quantized tori. In Section 6, we reproduce the classical relation between KdV equation and (classical) theta functions. Finally in Section 7, we calculate the derivatives of theta vectors and show that these quantum theta vectors satisfies Hirota's bilinear relation.
 
\section{Classical Theta functions}
In this section we study the \emph{classical theta functions} which are ubiquitous in algebra geometry and number theory.  Theta functions are applied in the soliton theory. The soliton solutions of integrable systems in KdV hierarchy or AKNS hierarchy can be represented by theta functions [GH]. In this paper, we focus on the link between theta functions and KdV equations, hence our scope is limited to the Jacobi-related theta functions. In subsection 2.1 we define the classical theta functions in a comprehensive way. Then in subsection 2.2 we introduce the Jacobi varieties and Abel maps to prepare us with the means relating theta functions and algebraic curves. In next section, this relation would be discussed.

\subsection{Classical Theta Functions}
Here we start from the \emph{theta functions} with one complex variable.

\begin{Def}[Holomorphic function]
\ \\
A \emph{holomorphic function} is a complex valued function with one or more variables that is differentiable in its neighborhood of every point in its domain. A holomorphic function whose domain is the whole domain is called an \emph{entire function}.
\end{Def}

\begin{Def}[Theta function]
\ \\
The (\emph{single variable}) \emph{theta function} is defined as:
\begin{equation}
\vartheta:\mathbb{C}\times\mathbb{H}\to\mathbb{C},
\end{equation}
such that $\forall{z}\in\mathbb{C}$ and $\forall{\tau}\in\mathbb{H}$,
\begin{align}
\vartheta(z, \tau)&=\sum_{n\in\mathbb{Z}}\textnormal{exp}(\pi{i}n^2\tau+\pi{i}nz),\\
&=1+2\sum_{n=1}^{\infty}(e^{\pi{i}\tau})^{n^2}\textnormal{cos}(2\pi{nz}),\\
&=\sum_{n\in\mathbb{Z}}q^{n^2}\eta^n,
\end{align}
where $q=e^{\pi{i}\tau}$ and $\eta=e^{2\pi{i}z}$.
\label{theta}
\end{Def}

\noindent If $\tau$ is fixed, the theta function $\vartheta(z, \tau)$ can be viewed as a Fourier series for a periodic \emph{entire function} of $z$ with period 1. Then the theta function would satisfy the identity:
\begin{equation}
\forall{m}\in\mathbb{Z}\quad\vartheta(z+m, \tau)=\vartheta(z, \tau).
\end{equation}
And respect the quasi-period $\tau$, the theta function also satisfies an equation:
\begin{equation}
\forall{m}\in\mathbb{Z}\quad\vartheta(z+m\tau, \tau)=\textnormal{exp}(-\pi{i}m^2\tau-2\pi{m}z)\vartheta(z, \tau),
\end{equation}

\noindent Then, to study the cases with genus more than one, we extend our definition by introducing the \emph{Siegel upper half-plan}.

\begin{Def}[Siegel's upper half-plane]
The Siegel's upper half-plane $\mathbb{H}^g$ is an generalization of upper-half space $\mathbb{H}$ with genus $g\ge1$. It is defined as:
\begin{equation}
\mathbb{H}^g:=\{\tau\in{\textnormal{Sym}_g(\mathbb{C})}|\forall\hat{z}\in\mathbb{C}^g(\hat{z}\ne0\Rightarrow{\hat{z}^{T}\textnormal{Im}({\tau})\hat{z}>0})\},
\end{equation}
 where $\textnormal{Sym}_n(\mathbb{C})$ is use to denote the set of symmetric $g\times{g}$ matrices with complex entries and $\hat{z}^T$ the transpose of $\hat{z}$.
\end{Def}

\noindent Now we can define the multivariable theta function called \emph{Riemann theta function} analogically.

\begin{Def}[Riemann Theta function]
\ \\
The \emph{Riemann theta function} is defined as:
\begin{equation}
\vartheta:\mathbb{C}^g\times\mathbb{H}^g\to\mathbb{C},
\end{equation}
such that $\forall\vec{z}\in\mathbb{C}^g$ and $\forall\tau\in\mathbb{H}^g$,
\begin{equation}
\vartheta(\vec{z}, \tau)=\sum_{\vec{n}\in\mathbb{Z}^{g}}\textnormal{exp}(\pi{i}\vec{n}^{T}\tau\hat{n}+2\pi{i}\vec{n}^{T}z).
\end{equation}
And similarly, the Riemann theta function is a \emph{holomorphic function} on $\mathbb{C}^g\times\mathbb{H}^g$ with the following identities hold:
\begin{align}
&\forall\vec{w}\in\mathbb{Z}^g&&\vartheta(\vec{z}+\vec{w}, \tau)=\vartheta(\vec{z}, \tau),\\
&\forall\vec{w}\in\mathbb{Z}^g&&\vartheta(\vec{z}+\vec{w}, \tau)=\textnormal{exp}(-\pi{i}\vec{w}^{T}\tau\vec{w}-2\pi{i}\vec{w}^{T}\vec{z})\vartheta(\vec{z},\tau).
\end{align}
Follows from (10), (11), $\vartheta(\vec{z}, \tau)$ can be also considered as a section of the line bundle over the torus $\mathbb{C}^g/\mathbb{Z}^{2g}$.
\label{R-theta}
\end{Def}

\subsection{Hyperelliptic curves}

Then we would introduce the hyperelliptic curves. Actually, for each hyperelliptic curve, we can associate a theta function with it.

\begin{Def}[Affine Plane Complex Algebraic Curve]
\ \\
An \emph{affine plane complex curve} $\mathcal{K}$ is the locus of zeros of a non-constant polynomial $\mathcal{F}$ in $\mathbb{C}^2$. If we have:
\[
\nabla\mathcal{F}(z_0,y_0)=(\mathcal{F}_z(z_0,y_0), \mathcal{F}_z(z_0,y_0))\ne0,
\]
then the polynomial $\mathcal{F}$ is called \emph{nonsingular} at a root $(z_0, y_0)$. Hence the algebraic curve is called \emph{nonsingular} at the point $P_0=(z_0, y_0)$.
\end{Def}

\begin{Thm}
\ \\
For $\mathcal{K}$ is an affine algebraic curve, if the polynomial $\mathcal{F}$ associate with it is both nonsingular and irreducible, then $\mathcal{K}$ is a Riemann surface.
\end{Thm}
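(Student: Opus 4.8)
The plan is to verify directly the three conditions in the definition of a Riemann surface: that $\mathcal{K}$, equipped with the subspace topology inherited from $\mathbb{C}^2$, is Hausdorff and second countable; that it carries an atlas of charts whose transition maps are biholomorphic; and that it is connected. The first condition is immediate, since any subspace of $\mathbb{C}^2\cong\mathbb{R}^4$ is automatically Hausdorff and second countable. The second condition will follow from nonsingularity together with the holomorphic implicit function theorem, while the third --- which is where irreducibility enters --- is the real content of the statement.

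For the charts, fix $P_0=(z_0,y_0)\in\mathcal{K}$. Nonsingularity means $\nabla\mathcal{F}(P_0)\ne0$, so at least one partial derivative is nonzero; assume say $\mathcal{F}_y(P_0)\ne0$ (the remaining case is symmetric, exchanging the roles of $z$ and $y$). The holomorphic implicit function theorem then produces a neighborhood $U$ of $z_0$ and a holomorphic function $g:U\to\mathbb{C}$ with $g(z_0)=y_0$ such that, near $P_0$, the curve $\mathcal{K}$ is exactly the graph $\{(z,g(z)):z\in U\}$. Consequently the projection $(z,y)\mapsto z$ restricts to a homeomorphism from a neighborhood of $P_0$ in $\mathcal{K}$ onto $U\subseteq\mathbb{C}$, and I take this as a chart. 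On the overlap of two such charts the transition map is either the identity, or one of the implicit functions $g$ (respectively $h$, arising when the other partial is used as the local coordinate), each of which is holomorphic with holomorphic inverse. Hence all transition maps are biholomorphic and $\mathcal{K}$ is a one-dimensional complex manifold.

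The main obstacle is connectedness, and this is precisely what irreducibility should buy us. Writing $d=\deg_y\mathcal{F}$, the projection $\pi:\mathcal{K}\to\mathbb{C}$, $(z,y)\mapsto z$, is a $d$-sheeted branched covering, unramified away from the finite set where $\mathcal{F}_y$ vanishes. Suppose for contradiction that $\mathcal{K}=A\sqcup B$ with $A,B$ nonempty, disjoint, and open (hence also closed). Over each point $z$ outside the branch locus the fiber $\pi^{-1}(z)$ consists of the $d$ roots of $\mathcal{F}(z,\cdot)$, and the clopen splitting partitions these roots consistently as $z$ varies. Forming the elementary symmetric functions of those roots lying in $A$ yields single-valued holomorphic functions of $z$ on the punctured base; by boundedness near the finitely many branch points these extend across them by Riemann's removable singularity theorem, and being algebraic they are in fact polynomials in $z$. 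This produces a nontrivial factorization of $\mathcal{F}$ as a polynomial in $y$, contradicting irreducibility, so $\mathcal{K}$ must be connected. Equivalently, one may phrase the argument through transitivity of the monodromy action of $\pi_1$ of the unbranched base on a fiber, which holds exactly because $\mathcal{F}$ is irreducible over $\mathbb{C}(z)$.

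I expect the delicate point to be making the symmetric-function extension fully rigorous: controlling the behavior of the candidate factors at the branch points and, if one compactifies to $\mathbb{P}^1$, at infinity, so as to conclude that the extended functions really are polynomials. The chart construction, by contrast, is a routine application of the implicit function theorem, and the topological properties are inherited for free from the ambient space.
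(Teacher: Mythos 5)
Your proposal is correct and follows exactly the same outline as the paper's proof: nonsingularity yields the complex-manifold structure (via the holomorphic implicit function theorem and graph charts), while irreducibility yields connectedness. The paper's own proof is only a two-sentence sketch asserting these two implications, so your write-up --- including the symmetric-function/monodromy argument for connectedness --- simply supplies the details the paper leaves implicit.
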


\begin{proof}
This theorem can be proved directly from the definition of a Riemann surface.  The non-singularity of the polynomial $\mathcal{F}$ ensured the affine curve $\mathcal{K}$ associated with it is smooth and the irreducibility of $\mathcal{F}$ ensured the connectedness of $\mathcal{K}$.
\end{proof}

\begin{Def}[Hyperelliptic compact Riemann surface]
\ \\
A compact Riemann surface is called hyperelliptic if it admits a meromorphic function of degree 2, i.e., a non-constant meromorphic function with precisely two poles counting multiplicity.
\end{Def}

\subsection{Homology basis and Period Lattice}

Here we introduce the homology basis and period relations of an algebraic curve. By getting use this period, we associate an algebraic curve with a Riemann theta function as well as a period lattice which can induce a Jacobi variety. 

\begin{Def}[Homology Basis of an Algebraic Curve]
\ \\
Given an algebraic curve $\mathcal{K}_g$. Suppose $\mathcal{K}_g$ is a compact Riemann surface, of genus $g\in\mathbb{Z}$, then we can choose a homology basis $\{a_j, b_j\}_{j=1}^g$ of $\mathcal{K}_g$ such that the cycles $a_j$ and $b_k$ satisfies the following identities:
\begin{align}
&a_j\circ{b_k}=\delta_{j, k},&&a_j\circ{a_k}=0,&&b_j\circ{b_k}=0&&j, k=1,\dots,g,
\end{align}
where these cycles intersecting and forms a right-handed coordinate system.
\end{Def}

\begin{Thm}[Riemann's Period Relation]
\ \\
Given an algebraic curve $\mathcal{K}_g$ with a homology basis $\{a_i, b_j\}$. Then suppose $\omega$ and $\nu$ are closed $C^1$ meromorphic differentials (1-forms) on $\mathcal{K}_g$, we have:
\begin{equation}
\iint\limits_{\mathcal{K}_g}\omega\wedge\nu=\sum_{j=1}^{g}((\int_{a_j}\omega)(\int_{b_j}\nu)-(\int_{b_j}\omega)(\int_{a_j}\nu)).
\end{equation}
Then, if $\omega$ and $\nu$ are holomorphic 1-forms on $\mathcal{K}_g$, we have:
 \begin{equation}
 \sum_{j=1}^{g}((\int_{a_j}\omega)(\int_{b_j}\nu)-(\int_{b_j}\omega)(\int_{a_j}\nu))=0.
 \end{equation}
In addition, if $\omega$ is a nonzero holomorhphic 1-form on $\mathcal{K}_g$, then
\begin{equation}
\text{Im}(\sum_{j=1}^{g}(\int_{a_j}\omega)(\int_{b_j}\omega))>0
\end{equation}
\end{Thm}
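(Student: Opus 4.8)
The plan is to reduce this global statement on $\mathcal{K}_g$ to a single application of Stokes' theorem on a planar model of the surface. First I would cut $\mathcal{K}_g$ along the chosen homology basis $\{a_j,b_j\}$ to obtain a simply connected $4g$-gon $\Delta$, whose oriented boundary reads $\partial\Delta=\prod_{j=1}^{g}a_jb_ja_j^{-1}b_j^{-1}$ in the standard normal form. Because $\Delta$ is simply connected and $\omega$ is closed, there is a single-valued primitive $f$ with $df=\omega$ on $\Delta$ (fixing a base point $P_0$). Using $d\nu=0$ I would then write $\omega\wedge\nu=df\wedge\nu=d(f\nu)$ and invoke Stokes:
\[
\iint_{\mathcal{K}_g}\omega\wedge\nu=\iint_{\Delta}d(f\nu)=\int_{\partial\Delta}f\,\nu.
\]

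The heart of the argument is then the boundary bookkeeping. The $4g$ edges of $\partial\Delta$ are identified in pairs ($a_j$ with $a_j^{-1}$, $b_j$ with $b_j^{-1}$); on two identified edges $\nu$ agrees pointwise, while the primitive $f$ jumps by the period of $\omega$ around the complementary cycle. Grouping the contributions of each identified pair, the integral $\int_{\partial\Delta}f\nu$ collapses to
\[
\sum_{j=1}^{g}\Big(\big(\int_{a_j}\omega\big)\big(\int_{b_j}\nu\big)-\big(\int_{b_j}\omega\big)\big(\int_{a_j}\nu\big)\Big),
\]
which is exactly the first claimed identity. This bookkeeping---tracking which period constant $f$ picks up on each side of an identified edge, with the correct signs coming from the opposite orientations of the paired edges---is the step I expect to be the main obstacle, and where sign errors are easiest to make.

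The two remaining assertions follow quickly from the first. If $\omega$ and $\nu$ are holomorphic they are both of type $(1,0)$, so in a local coordinate $z$ we have $\omega\wedge\nu=u\,v\,dz\wedge dz=0$; hence the surface integral on the left vanishes and the bilinear sum is zero, giving the second identity. For the inequality I would specialize the first identity to $\nu=\overline{\omega}$, for which $\int_{a_j}\overline{\omega}=\overline{\int_{a_j}\omega}$ and likewise over $b_j$. Writing $\omega=u\,dz$ locally, $\omega\wedge\overline{\omega}=|u|^{2}\,dz\wedge d\overline{z}=-2i\,|u|^{2}\,dx\wedge dy$, so the left-hand integral equals $-2i\iint_{\mathcal{K}_g}|u|^{2}\,dx\,dy$, which is strictly negative imaginary for $\omega\neq0$. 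Matching this against the bilinear sum and using $w\overline{w'}-w'\overline{w}=2i\,\mathrm{Im}(w\overline{w'})$ yields the positive-definiteness of the period pairing, i.e. the stated inequality; I would note that the quantity naturally controlled by this computation is $\mathrm{Im}\big(\sum_{j}\overline{(\int_{a_j}\omega)}\int_{b_j}\omega\big)>0$, the Hermitian form whose positivity is precisely what forces the normalized period matrix into the Siegel upper half-plane $\mathbb{H}^g$ used earlier.

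One technical caveat I would address at the outset: if $\omega,\nu$ are genuinely meromorphic, both the surface integral and the primitive $f$ require care at the poles. For the differentials of the first kind relevant to the constructions that follow no poles occur, so I would either restrict to that case or puncture $\mathcal{K}_g$ at the poles, run the Stokes argument on the complement, and absorb the small-circle contributions as residue corrections, checking that they do not disturb the holomorphic statements we actually invoke.
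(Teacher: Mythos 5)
Your proposal is correct and takes essentially the same approach as the paper's own proof, which likewise invokes Stokes' theorem on the canonical dissection of $\mathcal{K}_g$ into the fundamental $4g$-gon $a_1b_1a_1^{-1}b_1^{-1}\cdots a_gb_ga_g^{-1}b_g^{-1}$; you simply carry out in full the primitive-plus-edge-identification bookkeeping that the paper's two-line sketch leaves implicit. Your closing remark is also the correct reading of the final claim: the quantity that is actually positive is $\mathrm{Im}\bigl(\sum_{j=1}^{g}\overline{\bigl(\int_{a_j}\omega\bigr)}\,\int_{b_j}\omega\bigr)$, and the conjugate is essential (rescaling $\omega$ by $i$ breaks the inequality as displayed in the paper), so your proof establishes the corrected statement.
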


\begin{proof}
This result can be obtained by Stokes' theorem and a canonical dissection of $\mathcal{K}_g$. The dissection is along the cycles of $\mathcal{K}_g$ and yields the simply connected interior $\hat{\mathcal{K}}_g$ of the fundamental polygon $\partial\hat{\mathcal{K}}_g$, which is given by
\begin{equation*}
\partial\hat{\mathcal{K}}_g=a_1b_1a^{-1}_1b^{-1}_1a_2b_2a^{-1}_2b^{-1}_2,\dots,a^{-1}_gb^{-1}_g.
\end{equation*}
\end{proof}

\begin{Def}[b-periods of differentials of the first kind]
\ \\
Given the homology basis $\{a_i, b_j\}$, we can then define the normalized basis of the spaces of holomorphic differentials on $\mathcal{K}_g$ such that,
\begin{equation}
\int_{a_k}\omega_j=\delta_{j, k},\quad j, k = 1,\dots,g.
\end{equation}
Then the b-period of $\omega_j$ is defined as:
\begin{equation}
\tau_{j, k}=\int_{b_k}\omega_j,\quad j, k= 1,\dots,g 
\end{equation}
\end{Def}

\begin{Cor}
\ \\
The b-periods of differentials of the first kind $\tau$ is an element of $\mathbb{H}^g$.
\end{Cor}

\begin{proof}
\ \\
Given Theorem 1.2 and the definition of $\mathbb{H}^g$, this corollary immediately follows.
\end{proof}

\begin{Def}[Periodic Lattice]
\ \\
The \emph{periodic lattice} $\Lambda_g$ in $\mathbb{C}^g$ is given by:
\begin{equation}
\Lambda_g=\{\vec{z}\in\mathbb{C}^g|(\exists\vec{n}, \vec{m}\in\mathbb{Z}^g)\ \vec{z}=\vec{n}+\vec{m}\tau\},
\end{equation}
where $\tau$ is an element of $\mathbb{H}^g$. Given an algebraic curve $\mathcal{K}^g$, by Corollary 1.3, we have a $\tau\in\mathbb{H}^g$ associate with this curve. Hence there is also an periodic lattice $\Lambda_g$ generated by this $\tau$ associate to the curve $\mathcal{K}_g$. 
\end{Def}

\begin{Def}[Theta function of a compact Riemann surface]
\ \\
Given a compact Riemann surface $\mathcal{K}_g$, with a homology basis $\{a_j, b_j\}_{j=1}^g$ and $\tau$, the b-periods of the differentials of the first kind, $\{\omega_j\}_{j=1}^g$. The Riemann theta function associated with $\mathcal{K}_g$ and the homology basis naturally arises as:
\begin{equation}
\vartheta(\vec{z})=\sum_{\vec{n}\in\mathbb{Z}^{g}}\textnormal{exp}(\pi{i}\vec{n}^{T}\tau\vec{n}+2\pi{i}\vec{n}^{T}z),\quad\vec{z}\in\mathbb{C}^g.
\end{equation}
By Corollary 1.X, we have $\tau\in\mathbb{H}^g$, hence $\vartheta(\vec{z}, \tau)$ is a well defined entire function on $\mathbb{C}^g$ satisfying the algebraic properties (10) and (11).
\end{Def}

\subsection{Jacobi Variety and Abel Map}

Here we introduce the Jacobi variety. Then we would study the Abel map and Jacobi inverse theory which allow us to interplay between the compact Riemann surface $\mathcal{K}_g$ and the its Jacobi variety $J(\mathcal{K}_g)$.

\begin{Def}[Jacobi variety]
\ \\
Given an algebraic curve $\mathcal{K}_g$, we then have its quasi-period $\tau$ and the period lattice $\Lambda_g$, then the Jacobi variety of a curve (which is an algebraic version of the Jacobian of a curve) can be defined. The Jacobi variety $J(\mathcal{K}_g)$ of $\mathcal{K}_g$ is defined as:
\begin{equation}
J(\mathcal{K}_g)=\mathbb{C}^g/\Lambda_g.
\end{equation}
It is a abelian variety and obey a symmetry called Jacobi group.
\end{Def}

\begin{Def}[Abel-Jacobi map]
\ \\
There is a map relating an algebraic curve $\mathcal{K}_g$ with its Jacobi variety $J(\mathcal{K}_g)$ called the Abel-Jacobi map. It is given by:
\begin{equation}
\forall{Q_0}\in\mathcal{K}_g,\quad\hat{A}_{Q_0}:\mathcal{K}_g\to{J(\mathcal{K}_g)};
\end{equation}
such that $\forall{P}\in\mathcal{K}_g$,
\begin{align}
\hat{A}_{Q_0}(P)&=(A_{Q_0, 1}(P),\dots, A_{Q_0, g}(P))\\
&=(\int_{Q_0}^{P}\omega_1,\dots,\int_{Q_0}^{P}\omega_g)\quad(\textnormal{mod}\ \Lambda_g)
\end{align}
\end{Def}

\section{$C^*$-Algebra}

$C^*$-algebras play an important role in the study of non-commutative geometry. There is an equivalence between the $C^*$-algebras and the Hausdorff spaces. Then, to study the non-commutative space, we first consider a non-commutative algebra and consider additional structure on it. In this section, we introduce the basic definitions of $C^*$-algebra which would be used to construct a quantized tori as a non-commutative differential manifold.

\subsection{Linear Algebra}

Here, we introduce the definitions of the Banach and the Hilbert spaces which would be used to define a $C^*$-algebra later.

\begin{Def}[Norm]
\ \\
A \emph{norm} on a (complex) vector space $V$ is a map $||\ ||:V\to\mathbb{R}$ such that,
\begin{align}
&||v||\le0;\\
&||v||=0\iff{v}=0;\\
&||\lambda{v}||=|\lambda|\ ||v||;\\
&||v+w||\le||v||+||w||&(\text{Triangle Inequality});
\end{align}
forall $v, w\in{V}$, $\lambda\in\mathbb{C}$. A vector space with a norm on it is called a \emph{normed vector space}. And a norm on $V$ defines its associate \emph{metric} $d$ on $V$ by $d(v, w):=||v-w||$. Hence a normed vector space is a \emph{metric space} under the associate metric of its norm.
\end{Def}

\begin{Def}[Inner Product]
\ \\
An \emph{Inner product} on a vector space $V$ is a map $(,):V\times{V}\to\mathbb{C}$ such that,
\begin{align}
&\overline{(v, w)}=(w, v)\\
&(v, \lambda{w}+\mu{u})=\lambda(v, w)+\mu(v, u)\\
&(v, v)\ge0\\
&(v, v)=0\iff{v=0}
\end{align}
for all $v, w, u\in{V}$ and $\lambda, \mu\in\mathbb{C}$. A vector space with an inner product defined on it is called an \emph{inner product space}. An inner product on $V$ also defines a norm on $V$ by $||v||=\sqrt{(v, v)}$. Hence $V$ is also a \emph{metric space} with respect to the metric associated with this norm. Then we can derive the \emph{Cauchy-Schwartz Inequality} on an inner product space such that
\begin{equation}
|(u, v)|\le||v||\cdot||w||
\end{equation}
for all $v, w\in{V}$.
\end{Def}

\noindent Given the norm and inner product defined, we can then define the Banach spaces and Hilbert spaces.

\begin{Def}[Completeness]
\ \\
A vector space is called \emph{completed} in some metric when all the Cauchy sequences respect to this metric converge in this space.
\end{Def}

\begin{Def}[Banach Space]
\ \\
A \emph{Banach space} is a normed vector space which is complete in its associate metric.
\end{Def}

\begin{Def}[Hilbert Space]
\ \\
A \emph{Hilbert space} is an inner product space which is complete in its associate metric.
\end{Def}

\begin{Def}[Bounded Operators]
\ \\
A \emph{bounded} (\emph{linear}) \emph{operator} on a \emph{Banach space} $B$ is a linear map $\beta:B\to{B}$ such that
\begin{equation}
||\beta||:=\text{sup}\{||\beta{v}||\ |\ v\in{B}\ \text{and}\ ||v||=1\}<\infty,
\end{equation}
where this norm is called the \emph{operator norm} on $B$. When $B$ is in addition a \emph{Hilbert space} $H$, then the norm is given by
\begin{equation}
||\beta||:=\text{sup}\{\sqrt{(\beta\psi,\beta\psi)}\ |\ \psi\in{H}\ \text{and}\ (\psi, \psi)=1\}.
\end{equation}
When $\beta$ is bounded, we then have
\begin{equation}
||\beta{v}||\le||\beta||\ ||v||.
\end{equation} 
\end{Def}

\begin{Eg}
Given a Banach space $B$, then the space $\mathfrak{B}(B)$ of bounded operators on $B$ is itself a Banach space in the operator norm. The proof directly follows the definitions.
\end{Eg}

\begin{Def}[Self-Adjoint operators]
Given a Hilbert space $H$, a \emph{self-adjoint operator} $A$ on $H$ is a bounded linear operator $A:H\to{H}$ such that
\begin{equation}
A^*=A
\end{equation}
where $*$ is the \emph{operator adjoint} defined by
\begin{equation}
(A\psi, \varphi)=(\psi, A^*\varphi).
\end{equation}
\end{Def}

\begin{Def}[Unitary Operators]
\ \\
Given a Hilbert space $H$, a \emph{Unitary operator} $U$ on $H$ is a bounded linear operator $U:H\to{H}$ such that
\begin{equation}
U^*{U}=UU^*=I
\end{equation}
where $I:H\to{H}$ is the identity operator and $U^*$ is the \emph{adjoint} of $U$.
\end{Def}

\subsection{Banach Algebra}
In this subsection, we give the definition and an example of Banach algebras. We would then define $C^*$-algebras as Banach algebras with involutions in next subsection.

\begin{Def}[Algebra]
\ \\
An \emph{algebra} is a vector space $\mathfrak{A}$ with an associative bilinear operation $\cdot:\mathfrak{A}\times\mathfrak{A}\to\mathfrak{A}$ usually called ``multiplication". And $\alpha\cdot\beta$ is usually written as $\alpha\beta$. If $\alpha\beta=\beta\alpha$, we say the algebra $\mathfrak{A}$ is \emph{commutative}.
\end{Def}

\begin{Def}[Banach Algebra]
\ \\
A \emph{Banach algebra} is an algebra $\mathfrak{A}$ which is also a Banach space, which satisfies:
\begin{equation}
\forall{\alpha, \beta}\in\mathfrak{A},\quad||\alpha\beta||\le||\alpha||\ ||\beta||.
\end{equation}
\end{Def}

\begin{Eg}
Let $B$ denote a Banach space, then $\mathcal{B}(B)$, the set of bounded operators on $B$ forms a Banach algebra under the operator norm.
\end{Eg}

\subsection{$C^*$-Algebra}
Here we define the $C^*$-Algebras, and we reproduce the theorem proved by Gelfund and Naimark which connected the $C^*$-algebra and geometry by presenting an equivalence between $C^*$-algebras and Hausdorff spaces. Via this intuition, we can study non-commutative geometry only with the corresponding (non-commutative) $C^*$-algebra of the topological spaces.

\begin{Def}[Involution]
\ \\
An \emph{involution} on an algebra $\mathfrak{A}$ is a real-linear map $\alpha\mapsto{\alpha^*}$ such that
\begin{align}
\alpha^{**}&=\alpha\\
(\alpha\beta)^{*}&=\beta^*\alpha^*\\
(\lambda\alpha)^*&=\overline{\lambda}\alpha^*
\end{align}
for all $\alpha, \beta\in\mathfrak{A}$. A $^*$-algebra is an algebra with an involution.
\end{Def}

\begin{Eg}
Let $H$ denote a Hilbert space, then $\mathcal{B}(H)$, the set of bounded operators on $H$ forms a Banach $^*$-algebra by defining an involution on it as the \emph{operator adjoint}.
\end{Eg}

\begin{Def}[$C^*$-Algebra]
\ \\
A $C^*$-\emph{algebra} is a $^*$-algebra $\mathfrak{A}$ which is at same time a complex Banach space, where
\begin{align}
||\alpha\beta||&\le||\alpha||\ ||\beta||\\
||\alpha^*\alpha||&={||\alpha||}^2
\end{align}
for all $\alpha\in\mathfrak{A}$.
\end{Def}

\begin{Cor}
\ \\
A Banach $^*$-algebra $\mathfrak{A}$ is a $C^*$-algebra if
\begin{equation}
||A||^2\le||A^*A||
\end{equation}
for all $A\in\mathfrak{A}$
\end{Cor}

\begin{Def}[Morphisms between $C^*$-Algebras]
\ \\
A \emph{morphism} between $C^*$-algebras $\mathfrak{A}, \mathfrak{B}$ is a complex-linear map $\phi:\mathfrak{A}\to\mathfrak{B}$such that,
\begin{align}
\phi(\alpha\beta)&=\phi(\beta)\phi(\alpha),\\
\phi(\alpha^*)&=\phi(\alpha)^*,
\end{align}
for all $\alpha\in\mathfrak{A}$, $\beta\in\mathfrak{B}$.
An {isomorphism} is a bijective morphism. Two $C^*$-algebra are \emph{isomorphic} if there exists a isomorphism between them. 
\end{Def}

\begin{Def}[Unit]
\ \\
A \emph{unit} in a Banach algebra $\mathfrak{A}$ is an element $\mathfrak{1}\in\mathfrak{A}$ such that
\begin{gather}
\mathfrak{1}\alpha=\alpha\mathfrak{1}=\alpha,\\
||\mathfrak{1}||=1.
\end{gather}
A \emph{unital} Banach algebra is a Banach algebra with unit.
\end{Def}

\noindent Then we introduce the celebrated result given by Gelfund and Naimark in 1943, which provide us an intuition associate the Hausdorff spaces and the $C^*$ algebras.

\begin{Thm}[Gelfand-Naimark]
\ \\
These mappings
\begin{align*}
\{\text{Locally compact Hausdorff spaces}\}&\to\{\text{Commutative}\ C^*\text{-algebras}\}\\
X&\mapsto{C}(X)\cong\mathfrak{C}_X
\\
&\text{and}
\\
\{\text{Compact Hausdorff spaces}\}&\to\{\text{Commutative unital}\ C^*\text{-algebras}\}\\
X&\mapsto{C}(X)\cong\mathfrak{C}_X
\end{align*}
are equivalences of categories. In words, for any compact/locally compact Hausdorff space $X$, the continuous functions $C(X)$, and the commutation algebra of $C(X)$ respect to the pointwise multiplication and addition forms a C*-algebra.\\
\end{Thm}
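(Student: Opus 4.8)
The plan is to construct an explicit (contravariant) inverse functor and to exhibit the two natural isomorphisms that witness the equivalence; the central device throughout is the \emph{Gelfand transform}. For a commutative unital $C^*$-algebra $\mathfrak{A}$ I would define its \emph{spectrum} $\Omega(\mathfrak{A})$ to be the set of nonzero morphisms $\chi:\mathfrak{A}\to\mathbb{C}$ (the \emph{characters}), topologised by the weak-$*$ topology inherited from the dual space $\mathfrak{A}^*$. A standard application of the Banach--Alaoglu theorem shows that $\Omega(\mathfrak{A})$ is compact Hausdorff when $\mathfrak{A}$ is unital, and locally compact Hausdorff in general; this assignment $\mathfrak{A}\mapsto\Omega(\mathfrak{A})$ is the candidate inverse functor.

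Next I would define the Gelfand transform $\Gamma:\mathfrak{A}\to C(\Omega(\mathfrak{A}))$ by $\Gamma(\alpha)(\chi)=\chi(\alpha)$ and check that it is a morphism of $C^*$-algebras carrying the unit to the constant function $1$. The decisive step is to prove that $\Gamma$ is an \emph{isometric} $*$-isomorphism. From the elementary theory of commutative Banach algebras the values $\chi(\alpha)$ range over the spectrum of $\alpha$, so $\|\Gamma(\alpha)\|_\infty=r(\alpha)$, the spectral radius. Using the $C^*$-identity $\|\alpha^*\alpha\|=\|\alpha\|^2$ one gets $\|\alpha^2\|=\|\alpha\|^2$ for self-adjoint $\alpha$, hence by iteration $r(\alpha)=\lim_{n}\|\alpha^n\|^{1/n}=\|\alpha\|$ for self-adjoint elements; applying this to $\alpha^*\alpha$ and invoking the $C^*$-identity once more yields $\|\Gamma(\alpha)\|_\infty=\|\alpha\|$ for every $\alpha$. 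Thus $\Gamma$ is isometric, and in particular injective with closed range. Surjectivity then follows from the \emph{Stone--Weierstrass theorem}: the image is a closed, unital, conjugation-closed subalgebra of $C(\Omega(\mathfrak{A}))$ that separates points, and therefore exhausts the whole algebra.

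To upgrade this object-level isomorphism into an equivalence of categories I would verify functoriality and naturality. A continuous map $f:X\to Y$ induces the pullback morphism $f^*:C(Y)\to C(X)$, $g\mapsto g\circ f$, so $C(-)$ is a contravariant functor; dually, a morphism $\phi:\mathfrak{A}\to\mathfrak{B}$ induces $\Omega(\phi):\Omega(\mathfrak{B})\to\Omega(\mathfrak{A})$ by precomposition. The two natural isomorphisms required are the Gelfand transform $\Gamma_{\mathfrak{A}}:\mathfrak{A}\to C(\Omega(\mathfrak{A}))$ just constructed and the evaluation map $\mathrm{ev}:X\to\Omega(C(X))$, $x\mapsto(g\mapsto g(x))$. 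The latter is where the topology of $X$ re-enters: I would show that $\mathrm{ev}$ is a continuous bijection, injectivity and surjectivity (every character is evaluation at some point) resting on Urysohn's lemma and the compactness of $X$, and then conclude it is a homeomorphism because a continuous bijection from a compact space onto a Hausdorff space is automatically one.

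I expect the main obstacle to be the simultaneous \emph{isometry and surjectivity} of the Gelfand transform, since this is where essentially all of the analytic content lives: the spectral-radius-equals-norm identity depends delicately on the $C^*$-identity, and surjectivity requires the full strength of Stone--Weierstrass. By contrast, the functoriality and the identification $\Omega(C(X))\cong X$ are comparatively formal once the definitions are fixed. Finally, the non-unital (locally compact) case I would reduce to the unital one by passing to the unitization $\mathfrak{A}^{+}$ on the algebra side and to the one-point compactification $X^{+}$ on the space side, thereby recovering the stated equivalence for $C_0(X)$ and commutative (possibly non-unital) $C^*$-algebras from the unital version already established.
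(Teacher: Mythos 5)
The paper offers no proof of this theorem at all: it is quoted as the celebrated 1943 result of Gelfand and Naimark and immediately used as motivation for the definition of non-commutative spaces, so there is nothing in the text to compare your argument against. Your sketch is the standard textbook proof --- Gelfand transform on the character space, isometry via the spectral-radius/$C^*$-identity argument, surjectivity via Stone--Weierstrass, the evaluation map $X\to\Omega(C(X))$ as the other natural isomorphism, and unitization/one-point compactification for the non-unital case --- and it is correct in outline, with the analytic weight placed exactly where it belongs.

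Two points would need attention in a full write-up. First, before you may invoke Stone--Weierstrass you must know that the image of $\Gamma$ is closed under complex conjugation, i.e.\ that every character is automatically a $*$-homomorphism, $\chi(\alpha^*)=\overline{\chi(\alpha)}$. This is a genuine lemma, not a formality: it rests on the fact that self-adjoint elements of a $C^*$-algebra have real spectrum (proved, e.g., via the unitaries $e^{it\alpha}$), and it is the second place, besides the isometry, where the $C^*$-identity rather than mere commutative-Banach-algebra theory is indispensable. Your sketch silently assumes it when calling the image ``conjugation-closed.'' Second, in the locally compact case the equivalence as literally stated --- all continuous maps on one side, all $*$-homomorphisms on the other --- is false: for a non-proper continuous map $f:X\to Y$, composition with $f$ does not preserve vanishing at infinity, so $f^*$ is not a morphism $C_0(Y)\to C_0(X)$. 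One must either restrict to proper maps or enlarge the morphisms to take values in multiplier algebras. Your reduction via unitization and one-point compactification is the right mechanism, but it establishes the equivalence only for such a corrected category; to be fair, the paper's own statement (which also writes $C(X)$ where $C_0(X)$ is meant, and whose ``in words'' gloss garbles the content) suffers from the same imprecision, so your proposal is, if anything, more careful than the claim it proves.
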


\begin{Def}[Non-commutative Space]
\ \\
The non-commutative spaces are defined by:
\begin{equation*}
\{\text{NC Compact Hausdorff spaces}\}:=\{\text{NC}\ C^*\text{-algebras}\ \text{and}\ C^*\text{-morphisms}\}^{\text{op}},
\end{equation*}
where ``$NC$" is the abbreviation for ``non-commutative".
\end{Def}

\noindent Obviously, this definition comes from the intuition given by Theorem 2.2. And in this convention, the non-commutative spaces are studied by non-commutative algebras, and the topological and geometrical structures of the space is also studied in the framework of $C^*$-algebra.

\subsection{$C^*$ modules}

Now, given a (compact/locally compact) Hausdorff space $X$, we want to find the counterparts of the vector bundles in the framework of the $C^*$-algebra given by $C(X)$. Hence we can study geometry on $X$ with structures and theorems of $C^*$-algebra. This correspondence plays a important role in non-commutative geometry. Since the non-commutative space is not defined explicitly, to define the bundles, connections, curvatures on it, we have to use the corresponding structures in $C^*$-algebra. In this subsection, we focused on the celebrated Serre-Swan theory, which gives an equivalence between the vector bundles and the $C^*$-modules. In non-commutative geometry, the vector bundles are then defined by its $C^*$-modules.

\begin{Def}[Vector Bundles]
\ \\
Let $M$ denote a topological space, then a (\emph{complex}) \emph{vector bundle} \emph{of rank} $k$ \emph{over} $M$ is a topological space $E$ together with a surjective map $\pi:E\to{M}$ such that:\\
(1)\quad$\forall{p\in{M}}$, $\pi^{-1}(p)$ is a vector space of dimension $k\in\mathbb{Z}^+$.\\
(2)\quad$\forall{p\in{M}}$, $\exists{U_p}\subset{M}$ where $U_p$ is a neighborhood of $p$ in $M$, and $\exists\Psi:\pi^{-1}(U):\to{U\times\mathbb{C}^k}$ which is a homeomorphism called the \emph{local trivialization of} $E$ \emph{over} $U$ for which
\begin{align*}
&(i)&&\pi_U\circ\Psi=\pi(\text{where}\ \pi_U:U\times\mathbb{C}^k\to{U}\ \text{is the projection}),\\
&(ii)&&\forall{q}\in{U},\ \text{the restriction of}\ \Psi \text{to}\ E_q\ \text{is a vector space isomorphism}\\
&&&\text{from}\ E_q\ \text{to}\ \{q\}\times\mathbb{C}^k\cong\mathbb{C}^k.
\end{align*}
Here $\pi^{-1}{p}$ is always called a \emph{fiber over} $p$. When both $E, M$ are smooth manifolds, $\pi$ is a smooth map and $\Psi$ are diffeomorphisms, then $E$ is called a \emph{smooth vector bundle}.
\end{Def}

\begin{Thm}
\ \\
Let $E$ denote a rank $k$ complex vector bundle over a Hausdorff space $M$. There is an integer $n\le{k}$ and an idempotent $p\in{C(X,{Mat}n(\mathbb{C}))X}$ such that $E\subseteq{X}\times\mathbb{C}^n$ and $\pi^{-1}(x)=p(x)\mathbb{C}^n$.
\label{Thm-Ep}
\end{Thm}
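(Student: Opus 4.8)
The plan is to realize $E$ as a direct summand of a trivial bundle and then take $p$ to be fibrewise orthogonal projection onto that summand. First I would invoke compactness of the base $X$ (implicit in the intended Serre--Swan setting): since $E$ is locally trivial, every point has a trivializing neighbourhood, and compactness lets me extract a finite subcover $\{U_\alpha\}_{\alpha=1}^m$ carrying local trivializations $\Psi_\alpha : \pi^{-1}(U_\alpha) \to U_\alpha \times \mathbb{C}^k$ as in the definition of a vector bundle. Because a compact Hausdorff space is normal, I would then choose a continuous partition of unity $\{\rho_\alpha\}_{\alpha=1}^m$ subordinate to this cover, with $\sum_\alpha \rho_\alpha \equiv 1$ and $\mathrm{supp}\,\rho_\alpha \subset U_\alpha$.

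Next I would build a fibrewise-injective continuous bundle map into a trivial bundle of rank $n = mk$. Writing $\pi_\alpha : U_\alpha \times \mathbb{C}^k \to \mathbb{C}^k$ for the second projection, define $\iota : E \to X \times \mathbb{C}^{mk}$ by $\iota(v) = \bigl(\pi(v);\, \sqrt{\rho_1(\pi(v))}\,\pi_1(\Psi_1(v)),\dots,\sqrt{\rho_m(\pi(v))}\,\pi_m(\Psi_m(v))\bigr)$, each term read as $0$ off $\mathrm{supp}\,\rho_\alpha$. This map is continuous and linear on fibres, and it is injective on each fibre since at any $x$ some $\rho_\alpha(x)\neq 0$, where $\Psi_\alpha$ is an isomorphism. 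Hence $\iota$ identifies $E$ with a rank-$k$ subbundle $E' \subset X \times \mathbb{C}^n$, with $E'_x := \iota(\pi^{-1}(x)) \cong \mathbb{C}^k$, giving the asserted inclusion $E \subseteq X \times \mathbb{C}^n$ with $n = mk \ge k$ (so the stated bound should read $n \ge k$, as a rank-$k$ bundle cannot embed fibrewise in a smaller $\mathbb{C}^n$).

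Finally I would equip $\mathbb{C}^n$ with its standard Hermitian inner product and let $p(x) \in \mathrm{Mat}_n(\mathbb{C})$ be the orthogonal projection onto $E'_x$. Then $p(x)^2 = p(x)$ and $p(x)\mathbb{C}^n = E'_x = \pi^{-1}(x)$ hold by construction, so it remains only to check that $x \mapsto p(x)$ is continuous, i.e. that $p \in C(X,\mathrm{Mat}_n(\mathbb{C}))$. For this I would argue locally: transporting a continuous local frame of $E$ through $\iota$ yields a continuous matrix-valued $F(x)$ whose $k$ columns span $E'_x$, and the projection is given by $p(x) = F(x)\bigl(F(x)^* F(x)\bigr)^{-1} F(x)^*$, which is continuous wherever $F$ has full rank. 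Since orthogonal projection onto a subspace is independent of the chosen frame, these local expressions agree on overlaps and patch to a single continuous $p$ on all of $X$.

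The main obstacle is precisely this last continuity claim. The formula above resolves it cleanly provided $F(x)^*F(x)$ is invertible, which holds exactly because $E'_x$ has constant dimension $k$, so the frame columns remain linearly independent; the essential inputs are therefore the local triviality and constant rank of $E$ together with continuity of matrix inversion on the invertible locus. Everything else --- fibrewise linearity, the idempotent identity $p^2 = p$, and the equality $p(x)\mathbb{C}^n = \pi^{-1}(x)$ --- is immediate from the definition of orthogonal projection.
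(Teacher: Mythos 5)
Your proof is correct, but there is nothing in the paper to compare it against: the paper states this theorem with no proof at all, and then leans on it as the key input to its (one-line) proof of the Serre--Swan theorem. Your argument is the standard one that fills this gap: take a finite trivializing cover (using compactness), a subordinate partition of unity, embed $E$ fibrewise-injectively into the trivial bundle $X\times\mathbb{C}^{mk}$ via the maps $v\mapsto\sqrt{\rho_\alpha(\pi(v))}\,\pi_\alpha(\Psi_\alpha(v))$, and let $p(x)$ be orthogonal projection onto the image fibre, with continuity of $x\mapsto p(x)$ secured locally by the frame formula $p(x)=F(x)\bigl(F(x)^*F(x)\bigr)^{-1}F(x)^*$ and globally by uniqueness of orthogonal projection. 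You also correctly repair two defects in the statement as printed: the base must be \emph{compact} Hausdorff (the paper says only ``Hausdorff space,'' under which the finite cover, and indeed the theorem itself, fails; compactness is what the paper's Serre--Swan context assumes anyway), and the inequality must read $n\ge k$ rather than $n\le k$, since a rank-$k$ bundle cannot embed fibrewise into $\mathbb{C}^n$ with $n<k$. Two cosmetic remarks: the square roots on the $\rho_\alpha$ are not needed for this statement (plain $\rho_\alpha$ already gives fibrewise injectivity; the square roots matter only if one wants the embedding to be isometric), and if you want the identification $E\cong\iota(E)$ to be a homeomorphism you should note that the inverse is continuous on each open set $\{\rho_\alpha>0\}$, where $v$ can be recovered from the $\alpha$-th block of $\iota(v)$ via $\Psi_\alpha^{-1}$; neither point affects the validity of the argument.
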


\begin{Def}[Sections]
\ \\
Given a (complex) smooth vector bundle $\pi:E\to{M}$ over a  smooth manifold $M$, a \emph{section} is a map $\Psi:M\to{E}$ such that
\begin{equation}
\pi(\Psi(x))=x\\
\end{equation}
for all $x\in{M}$. And the \emph{continuous sections} from $M$ to $E$ form a vector space under pointwise addition and scalar multiplication which is denoted as $\Gamma(E)$.
\end{Def}

\begin{Thm}
\ \\
When $M$ is a connected Hausdorff space, and $E$ a vector bundle over $M$, then $\Gamma(E)$ is a right-module for the commutative $C^*$-algebra $C(M)$. The linear right action $\beta$ of $C(M)$ on $\Gamma(E)$ is given by
\begin{equation}
\beta(f)(\Psi)(x):=f(x)\Psi(x)
\end{equation}
for all $f\in{C}(M)$, $\Psi\in\Gamma(E)$.
\end{Thm}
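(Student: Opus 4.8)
The plan is to verify directly that the pair $(\Gamma(E), \beta)$ satisfies the axioms of a right $C(M)$-module. The preceding definition already equips $\Gamma(E)$ with the structure of a complex vector space under pointwise operations, so in particular it is an abelian group under addition; what remains is twofold. First I must show that the prescription $\beta(f)(\Psi): x \mapsto f(x)\Psi(x)$ genuinely lands in $\Gamma(E)$, that is, produces a \emph{continuous} section and not merely a set-theoretic one. Second, I must check that $\beta$ is additive in both slots, that it respects products in $C(M)$, and that it sends the unit of $C(M)$ to the identity map on $\Gamma(E)$.

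For well-definedness I would argue in two stages. Fix $f \in C(M)$ and $\Psi \in \Gamma(E)$. For each $x \in M$ the fiber $E_x = \pi^{-1}(x)$ is a $\mathbb{C}$-vector space containing $\Psi(x)$, so the scalar multiple $f(x)\Psi(x)$ lies again in $E_x$; hence $\pi(f(x)\Psi(x)) = x$ and $\beta(f)(\Psi)$ is at least a section. Continuity is the only substantive point, and it rests on the local triviality of $E$. Over a trivializing neighborhood $U$ with local trivialization $\Phi: \pi^{-1}(U) \to U \times \mathbb{C}^k$, the restriction $\Phi \circ \Psi|_U$ takes the form $x \mapsto (x, s(x))$ for some continuous $s: U \to \mathbb{C}^k$; since $\Phi$ restricts to a linear isomorphism on each fiber, $\Phi(f(x)\Psi(x)) = (x, f(x)s(x))$. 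The map $x \mapsto f(x)s(x)$ is continuous, being the composite of $(f, s)$ with the continuous scalar action $\mathbb{C} \times \mathbb{C}^k \to \mathbb{C}^k$, so $\beta(f)(\Psi)|_U$ is continuous. As continuity is a local property and the trivializing neighborhoods cover $M$, we conclude $\beta(f)(\Psi) \in \Gamma(E)$. I expect this continuity argument to be the main obstacle; everything else is fiberwise linear algebra.

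For the module axioms I would evaluate each identity at an arbitrary point $x$ and reduce to the vector-space axioms in $E_x$ together with the commutative ring structure of $C(M)$. Additivity in the section argument reads $\beta(f)(\Psi_1 + \Psi_2)(x) = f(x)(\Psi_1(x) + \Psi_2(x)) = f(x)\Psi_1(x) + f(x)\Psi_2(x)$, and additivity in the function argument is analogous; the multiplicative compatibility is $\beta(f_1 f_2)(\Psi)(x) = f_1(x)(f_2(x)\Psi(x)) = \beta(f_1)(\beta(f_2)(\Psi))(x)$, while $\beta(\mathfrak{1})(\Psi)(x) = \Psi(x)$ with $\mathfrak{1}$ the constant function $1$. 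Because $C(M)$ is commutative, the left- and right-module conventions coincide, so this same computation exhibits $\Gamma(E)$ as a right module. I would note finally that neither connectedness nor the Hausdorff hypothesis is actually used in the module computation; they belong to the ambient Serre-Swan setting (and ensure, via Theorem 2.2, that $C(M)$ is the relevant commutative $C^*$-algebra) rather than to the verification itself.
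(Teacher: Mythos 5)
Your proposal is correct, and in fact the paper offers no proof of this theorem at all --- it is stated bare, so your argument supplies exactly what the paper omits. Your decomposition is the natural one: the only substantive point is that $\beta(f)(\Psi)$ is again a \emph{continuous} section, and your argument via a local trivialization $\Phi$ (writing $\Phi\circ\Psi|_U$ as $x\mapsto(x,s(x))$, using fiberwise linearity of $\Phi$ to get $\Phi(f(x)\Psi(x))=(x,f(x)s(x))$, and invoking locality of continuity) handles it cleanly, while the module axioms reduce to pointwise linear algebra in each fiber $E_x$. Two small remarks: since the theorem asserts a \emph{linear} right action, you should also record the (equally trivial) compatibility $\beta(\lambda f)=\lambda\beta(f)$ for $\lambda\in\mathbb{C}$, not only additivity and multiplicativity; and your closing observation --- that commutativity of $C(M)$ is what lets the homomorphism property $\beta(f_1f_2)=\beta(f_1)\circ\beta(f_2)$ serve as a right-module structure, while connectedness and the Hausdorff hypothesis play no role in the verification itself --- is accurate and worth keeping, as it clarifies which hypotheses belong to the surrounding Serre--Swan framework rather than to this statement.
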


\begin{Def}[Free Module]
\ \\
A \emph{free module} $\Xi$ of an algebra $\mathfrak{A}$ is a direct sum
\begin{equation}
\Xi=\otimes^m\mathfrak{A}
\end{equation}
of a number of copies of $\mathfrak{A}$ itself. The linear action $\beta$ of $C(M)$ on $\Gamma(E)$ is then given by
\begin{equation}
\beta(B)(A_1\otimes\cdots\otimes{A}_m)(x):=A_1B\otimes\cdots\otimes{A_m}B
\end{equation}
for all $B\in\mathfrak{A}$, $A_1\otimes\cdots\otimes{A}_m\in\otimes^m\mathfrak{A}$. For $m$ is finite, the free module $\Xi=\otimes^m\mathfrak{A}$ is then called a \emph{finitely generated free module}.
\end{Def}

\begin{Cor}
\ \\
Consider the trivial bundle $M\times\mathbb{C}^n$ over the space $M$. We have
\begin{equation}
\Gamma(X\times\mathbb{C}^n)\cong(C)(M, \mathbb{C}^n)\cong{C(X)}\otimes\mathbb{C}^m\cong\otimes^mC(M).
\end{equation}
Hence, for finite $m$, $\Gamma(X\times\mathbb{C}^n)\cong\otimes^mC(M)$ is a \emph{finitely generated free module} for $C(X)$.
\end{Cor}

\begin{Def}[Projective Module]
\ \\
A \emph{projective module} $\Xi$ of an algebra $\mathfrak{A}$ is given by
\begin{equation}
\Xi=p\otimes^m\mathfrak{A}
\end{equation}
for all idempotent $p\in{Mat}_m(\mathfrak{A})$. Since the action of $p$ on $\otimes^n\mathfrak{A}$ commutes with the right multipication by $\mathfrak{A}$ on each component, it is a well-defined module. If $m$ is finite,  $\Xi=p\otimes^m\mathfrak{A}$ is called a \emph{finitely generated projective module}.
\end{Def}

\begin{Cor}
\ \\
Given a non-trivial rank $k$ complex vector bundle $E_p$ over a connected compact Hausdorff space $M$. Let $n\le{k}$, $p\in{C}(M, {Mat}_n(\mathbb{C}))$ and $E_p\in{X}\times\mathbb{C}^n$, with $\pi^{-1}(x)=p(x)\mathbb{C}^n$. Then we have
\begin{equation}
\Gamma(E_p)=p\otimes^n{C(M)}.
\label{Cor-Ep}
\end{equation}
For $n$ is finite, $\Gamma(E_p)=p\otimes^n{C(M)}$ is a \emph{finitely generated projective module}.
\end{Cor}

\begin{Thm}[Serre-Swan Theorem]
\ \\
Let $M$ denote a connected compact Hausdorff space. There is a bijective correspondence between complex vector bundles $E$ over $M$ and \emph{finitely generated projective modules } $\Xi(E)=\Gamma(E)$ for $C(M)$.
\end{Thm}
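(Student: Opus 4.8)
The correspondence to establish is $E \mapsto \Gamma(E)$, and the plan is to exhibit an explicit inverse: to a finitely generated projective module $\Xi = p\otimes^n C(M)$ (with $p \in C(M,\mathrm{Mat}_n(\mathbb{C}))$ idempotent) I would assign the bundle $E_p := \{(x,v) \in M\times\mathbb{C}^n : v \in p(x)\mathbb{C}^n\}$. I would then verify that both assignments are well defined, that they are mutually inverse, and finally that the assignment is injective on isomorphism classes, so that the correspondence is genuinely bijective. The first half of well-definedness is essentially already assembled: Theorem~\ref{Thm-Ep} presents any rank $k$ bundle as $E = E_p$ for a continuous idempotent $p$, and the identity~\eqref{Cor-Ep} identifies $\Gamma(E_p)$ with the finitely generated projective module $p\otimes^n C(M)$.

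For the reverse assignment I would show that $E_p$ built from an arbitrary idempotent $p$ is a genuine complex vector bundle. The fibre over $x$ is $p(x)\mathbb{C}^n$, whose dimension equals $\mathrm{tr}\,p(x)$ since $p(x)$ is idempotent; as $p$ is continuous and the trace is integer valued, this dimension is locally constant, hence constant on the connected space $M$. Local triviality is the analytic step: near a point $x_0$, continuity of $p$ forces $p(x)$ to be close in norm to $p(x_0)$, and for nearby idempotents there is a canonical invertible matrix, built from $p(x)p(x_0) + (1-p(x))(1-p(x_0))$, that intertwines them. This yields a continuous local trivialization $\pi^{-1}(U) \cong U\times\mathbb{C}^k$. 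Once $E_p$ is known to be a bundle, the identity~\eqref{Cor-Ep} gives $\Gamma(E_p) \cong \Xi$, so the assignment is essentially surjective.

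To see the two assignments are mutually inverse I would reconstruct the fibres of $E$ intrinsically from its module of sections. For each $x$ let $\mathfrak{m}_x \subset C(M)$ denote the maximal ideal of functions vanishing at $x$; then evaluation induces $\Gamma(E)/\mathfrak{m}_x\Gamma(E) \cong E_x$. This recovers each fibre from $\Gamma(E)$ alone, and I would check that the recovered family reassembles, with its natural topology, into $E$, giving $E_{\Gamma(E)} \cong E$.

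The main obstacle I anticipate is injectivity on isomorphism classes, namely that $\Gamma(E) \cong \Gamma(E')$ as $C(M)$-modules forces $E \cong E'$ as bundles. The fibrewise reconstruction above shows any such module isomorphism induces, for each $x$, a linear isomorphism $E_x \cong E'_x$; the real work is to verify that these fibre maps vary continuously in $x$ and therefore glue into a global bundle isomorphism. I expect this continuity-and-gluing step, together with the perturbation argument for local triviality used in the essential-surjectivity step, to be the technical heart of the proof, the remaining parts following formally from Theorem~\ref{Thm-Ep} and the identity~\eqref{Cor-Ep}.
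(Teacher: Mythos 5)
Your proposal is correct and takes essentially the same route as the paper: both set up the mutually inverse assignments $E\mapsto\Gamma(E)$ and $p\otimes^n C(M)\mapsto E_p$, resting on Theorem~\ref{Thm-Ep} and the identity (\ref{Cor-Ep}). The paper's own proof is only a two-line sketch of this correspondence, so the additional steps you plan --- the perturbation argument for local triviality of $E_p$, the fibre reconstruction $\Gamma(E)/\mathfrak{m}_x\Gamma(E)\cong E_x$, and the continuity-and-gluing check on isomorphism classes --- are precisely the details the paper leaves unproved.
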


\begin{proof}
\ \\
This result follows Theorem \ref{Thm-Ep}, where any vector bundle $E$ is of the form $E_p$ hence correspond to a finitely generated projective module by (\ref{Cor-Ep}). Conversely, given a \emph{finitely generated projective module} $p\otimes^n{C(M)}$, one find such a $E_p\in{X}\times\mathbb{C}^n$ hence $E_p$ is defined.
\end{proof}

\begin{Def}[Hermitian Vector Bundle]
\ \\
In commutative case, a \emph{Hermitian vector bundle} $E$ is a complex vector bundle with an inner product ${\langle,\rangle}_x$ called \emph{Hermitian metric} defined on each fiber $\phi^{-1}(x)$, where  $\langle,\rangle_x$ is continuously depends on $x$, i.e.,
\begin{equation}
{C(X)}\ni\langle,\rangle_x:x\to\langle\psi(x),\varphi(x)\rangle_x
\end{equation}
for all $\psi, \varphi\in\Gamma(V)$.
\end{Def}

\section{Quantized Tori}
In this section, we would study quantized torus as a non-commutative differentiable manifold. Quantized tori arised in many different areas and actually, can be defined in various ways. In this paper, we followed [R] and view the quantized tori as a deformation quantization of classical tori. The method of deformation quantization developed by Kontsevich is introduced and we then use it to construct quantized tori. In addition, we would study the topological and smooth structures on the quantized tori.

\subsection{Deformation Quantization}

Physicists use (classical and quantum) mechanics to study the time evolution of a system. In classical mechanics, the possible states of a system forms a Poisson manifold $M$ and the observables (the physical quantities can be measured) are the smooth functions $C^\infty(M)$, which forms a commutative algebra. But in quantum mechanics, the possible states of a system are vectors in a Hilbert space and the observables are given by self-adjoint (Hermitian) operators, forming a $C^*$-algebra. In detail, Deformation quantization defines an algebra of observables in the quantum object corresponding to the observables of classical object $C^\infty(M)$ by deforming the poisson structure on the commutative algebra of $C^\infty(M)$.\\

\begin{Def}[Lie Algebra]
\ \\
A \emph{Lie algebra} is a vector space $\mathfrak{g}$ over some field $K$ with a binary relation $[,]:\mathfrak{g}\to\mathfrak{g}$called the \emph{Lie bracket} such that
\begin{align*}
&[ax+by, z]=a[x, z]+b[y, z], [z, ax+by]=a[z, x]+b[z, y];&(\text{Bilinearity})\\
&[x, x]=0;&(\text{Alternativity})\\
&[x, [y, z]]+[z, [x, y]]+[y, [z, x]]=0;&(\text{Jacobi identity})\\
&[x, y]=-[y, x].&(\text{Anticommutativity})
\end{align*}
\end{Def}

\begin{Def}[Poisson Algebra]
\ \\
A \emph{Poisson algebra} is a real vector space $A$ equipped with a commutative associate algebra structure
\begin{equation*}
(f, g)=fg,
\end{equation*}
with a Lie algebra structure
\begin{equation}
(f, g)\to\{f, g\}
\end{equation}
satisfying the compatibility condition
\begin{equation}
\{fg, h\}=f\{g, h\}+\{f, h\}g.
\end{equation}
\end{Def}

\begin{Def}[Poisson Manifold]
\ \\
A \emph{Poisson Manifold} is a manifold $M$ such that the commutative algebra of $C^\infty(M)$ is a Poisson algebra where the commutative product is defined by the ordinary pointwise multiplication of smooth functions and a Lie product satisfying Leibniz rule.
\end{Def}

\begin{Def}[Formal Deformation of Poisson Algebra]
\ \\
A \emph{formal deformation} of an algebra $A=C^\infty(M)$, is defined by a \emph{star-product} $*$, which is a map
\begin{align*}
*_\hslash:A\times{A}&\to{A}[\hslash]\\
(f, g)&\mapsto\sum_{k=0}^{\infty}f\times_kg\hslash^k,
\end{align*}
satisfying the following conditions:\\
\indent (i) Formal associativity, i.e.,
\begin{equation*}
\forall{p}\ge0,\quad\sum_{k+l=p}[((f\times_lg)\times_kh)-(f\times_k(g\times_lh))].
\end{equation*}
\indent (ii) $f\times_0g=fg$.\\
\indent (iii) $\frac{1}{2}(f\times_1g-g\times_1f)=\{f, g\}$, where \{ , \} is the Poisson bracket.\\
\indent (iv) Each map $\times_k:A\times{A}\to{A}$ should be a bidifferential operator.
\end{Def}

\begin{Def}[Formal Deformation of Poisson Bracket]
\ \\
A \emph{formal deformation} of the Poisson bracket on $A$ is askew-symmetric map
\begin{gather}
[,]:A\times{A}\to{A}[\hslash]\\
(f,g)\mapsto\sum_{k=0}^{\infty}T_k(f, g)\hslash^k
\end{gather}
satisfying:\\
\indent (i) the formal Jacobi identity, i.e.
\begin{equation}
\sum(\sum_{k+l=p}T_k(T_l(f, g), h))=0
\end{equation}
for all $p\le0$ and the outer sum is taken over the cyclic permutations of the set \{$f, g, h$\}.\\
\noindent (ii) $T_0(f, g)$=\{$f, g$\} where $\{,\}$ is the Poisson bracket.\\
\noindent (iii) Each map $T_k:A\times{A}\to{A}$ should be a bidifferential operator.
\end{Def}

\subsection{Quantized Tori}

Here we view the classical tori $\mathbb{T}^n=\mathbb{R}^n/\mathbb{Z}^n$ as Poisson manifolds, then following the deformation quantization we've introduced, we define the \emph{quantized tori} $T^n_\theta$ as $C^*$-algebras of self-adjoint operators.

\begin{Def}[Classical Tori]
\ \\
A classical $n$-torus $\mathbb{T}^n$ can be constructed by:
\begin{equation}
\forall{n}\in\mathbb{Z}^+,\quad\mathbb{T}^n=\mathbb{R}^n/\mathbb{Z}^n,
\end{equation}
which is actually an $n$-dimensional compact smooth manifold. Here, the collection of all smooth (infinitely differentiable) complex-valued functions on $\mathbb{T}^n$ is denoted as $C^{\infty}(\mathbb{T}^n)$. By Theorem 2.1, $C^{\infty}(\mathbb{T}^n)$ forms a $C^*$-algebra over $\mathbb{T}^n$.
\end{Def}

\begin{Def}[Poisson Structure on Classical Tori]
\ \\
Hence we can define a Poisson bracket by selecting a real skew-symmetric $n\times{n}$ matrix $\theta$:
\begin{equation}
\forall{f, g}\in{C}^{\infty}(\mathbb{T}^n),\quad\{f, g\}=\sum{\theta}_{jk}(\frac{\partial{f}}{\partial{x_j}})(\frac{\partial{g}}{\partial{x_k}}).
\end{equation}
\end{Def}

\begin{Def}[Canonical Commutation Relation]
\ \\
The deformation quantization of a classical torus is given by deforming the Poisson bracket to a one-parameter family of associate products $*_\hslash$, satisfying the \emph{canonical commutation relation}, i.e.,
\begin{equation}
\lim_{\hslash\to0}(\hat{f}*_\hslash\hat{g}-\hat{g}*_\hslash\hat{f})/i\hslash=\widehat{\{f, g\}}, \footnote{Throughout this paper, $i$ denote the imaginary unit $\sqrt{-1}$.}
\end{equation}
for all $f, g\in{C}^\infty(\mathbb{T}^n)$ and $\hat{\ }$ denote the Fourier transformation introduced below.
\end{Def}

\begin{Def}[Fourier Transformation]
\ \\
The \emph{Fourier transformations} $\hat{f}$ on $\mathbb{T}^n$ is defined by:
\begin{equation}
\hat{f}=\int\limits_{\mathbb{T}^n}\text{exp}(-2\pi{i}\vec{x}\cdot\vec{p})f(\vec{x})d\vec{x},
\end{equation}
for all ${f}\in{C^\infty}(\mathbb{T}^n)$.
\end{Def}

\begin{Def}[Schwartz Space]
\ \\
The \emph{Schwartz space} $S(\mathbb{Z}^n)$ over $\mathbb{Z}^n$ consists of the functions $f:\mathbb{Z}^n\to\mathbb{C}$ such that for $m\in\mathbb{Z}^n$, as $|m|\to\infty$, $f(m)$ decays to zero faster than any inverse power of $m$.
\end{Def}

\begin{Thm}
\ \\
A celebrated result is that the Fourier transformations carries $C^{\infty}(\mathbb{T}^n)$ onto $S(\mathbb{Z}^n)$. Where the pointwise multiplication on $C^{\infty}(\mathbb{T}^n)$ is  carried to convolution on $S(\mathbb{Z}^n)$.
\end{Thm}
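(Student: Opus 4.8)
The plan is to prove the two assertions separately: first that the Fourier transform maps $C^\infty(\mathbb{T}^n)$ bijectively onto $S(\mathbb{Z}^n)$, and then that it carries pointwise product to convolution. Throughout I regard the dual variable $\vec{p}$ as ranging over the lattice $\mathbb{Z}^n$, so that for $f\in C^\infty(\mathbb{T}^n)$ the transform $\hat{f}$ is the sequence of Fourier coefficients $\hat{f}(m)=\int_{\mathbb{T}^n}e^{-2\pi i m\cdot x}f(x)\,dx$, $m\in\mathbb{Z}^n$.

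For the inclusion into $S(\mathbb{Z}^n)$, the engine is integration by parts. Since $\mathbb{T}^n$ has no boundary, integrating against $e^{-2\pi i m\cdot x}$ produces no boundary terms, and for every multi-index $\alpha$ one obtains
\begin{equation*}
(2\pi i m)^\alpha\hat{f}(m)=\int_{\mathbb{T}^n}e^{-2\pi i m\cdot x}\,\partial^\alpha f(x)\,dx.
\end{equation*}
Because $f$ is smooth and $\mathbb{T}^n$ is compact, each $\partial^\alpha f$ is bounded, so the right-hand side is bounded uniformly in $m$ by $\|\partial^\alpha f\|_\infty$. Hence $m^\alpha\hat{f}(m)$ is bounded for every $\alpha$, which is exactly the statement that $\hat{f}(m)$ decays faster than any inverse power of $|m|$; thus $\hat{f}\in S(\mathbb{Z}^n)$.

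For surjectivity I would run the reverse construction. Given $c=(c_m)\in S(\mathbb{Z}^n)$, set $f(x)=\sum_{m}c_m e^{2\pi i m\cdot x}$. Rapid decay of $c$ guarantees that for each $\alpha$ the formally differentiated series is dominated by a constant multiple of $\sum_m|c_m|\,|m|^{|\alpha|}$, which is finite (choosing the decay exponent larger than $|\alpha|+n$); it therefore converges absolutely and uniformly, so term-by-term differentiation is legitimate and $f\in C^\infty(\mathbb{T}^n)$. Orthonormality of the characters, $\int_{\mathbb{T}^n}e^{-2\pi i m\cdot x}e^{2\pi i m'\cdot x}\,dx=\delta_{m,m'}$, then gives $\hat{f}(m)=c_m$, so the map is onto; injectivity follows from the same orthogonality (vanishing coefficients force $f=0$), so the transform is a bijection.

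Finally, to turn products into convolutions, I substitute the absolutely convergent expansions of $f$ and $g$ into $fg$ and regroup by total frequency. Absolute summability of $\hat{f}$ and $\hat{g}$ justifies the rearrangement, and collecting the terms with $k+l=m$ yields $\widehat{fg}(m)=\sum_{k}\hat{f}(k)\hat{g}(m-k)$, i.e. $\widehat{fg}=\hat{f}*\hat{g}$, as claimed. The main obstacle is the surjectivity step: one must verify that the reconstructed Fourier series is genuinely $C^\infty$ and may be differentiated termwise, and this is where the full strength of rapid decay (not merely summability) is used. By contrast the decay estimate and the convolution identity are essentially formal once integration by parts and absolute convergence are in hand.
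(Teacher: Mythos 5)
Your proposal is correct, but there is no proof in the paper to compare it with: the paper states this theorem as a ``celebrated result'' and offers no argument at all, in effect deferring to the literature (Rieffel [R1]). Your write-up supplies the standard proof and does so completely: integration by parts on the boundaryless torus yields $(2\pi i m)^\alpha\hat f(m)=\widehat{\partial^\alpha f}(m)$, whose boundedness for every multi-index $\alpha$ is exactly rapid decay; conversely, for $c\in S(\mathbb{Z}^n)$ every termwise-differentiated series $\sum_m c_m(2\pi i m)^\alpha e^{2\pi i m\cdot x}$ is dominated by $\sum_m |c_m|\,|m|^{|\alpha|}<\infty$, so $f=\sum_m c_m e^{2\pi i m\cdot x}$ is smooth with $\hat f(m)=c_m$ by orthonormality of characters, giving surjectivity; and the Cauchy-product regrouping, legitimate by absolute summability, gives $\widehat{fg}=\hat f * \hat g$. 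Two minor remarks. First, your parenthetical injectivity claim needs slightly more than orthogonality: that vanishing Fourier coefficients force $f=0$ is the uniqueness theorem, which rests on completeness of the character system (Fej\'er or Stone--Weierstrass); but injectivity is not part of the stated result, so this does not affect correctness. Second, what your approach buys is self-containedness and a precise identification of where the full strength of rapid decay (rather than mere summability) enters, namely the termwise differentiation in the surjectivity step; what the paper's citation-only approach buys is brevity, this being a genuinely standard fact whose proof is tangential to the paper's aims.
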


\begin{Thm}[Fourier Transformation for Poisson Bracket]
\ \\
Fourier transformation from $C^\infty(\mathbb{T}^n)$ to $S(\mathbb{Z}^n)$ also carries the Poisson bracket to:
\begin{equation}
\widehat{\{f, g\}}(p)=-4\pi^2\sum_{q\in\mathbb{Z}^n}\hat{f}(q)\hat{g}(p-q)\gamma(q, p-q),
\end{equation}
for all $f, g\in{C^\infty}({\mathbb{T}}^n)$, $p\in\mathbb{Z}^n$ and
\begin{equation}
\gamma(p, q)=\sum\theta_{jk}p_jq_k.
\end{equation}
\end{Thm}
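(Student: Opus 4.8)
The plan is to reduce the statement to two elementary properties of the Fourier transform on $\mathbb{T}^n$: that it converts each partial derivative $\partial/\partial x_j$ into multiplication by the frequency $2\pi i\, p_j$, and that (by the preceding theorem) it converts pointwise multiplication into convolution on $S(\mathbb{Z}^n)$. Throughout I write the inverse transform as the Fourier series
\begin{equation*}
f(\vec{x}) = \sum_{p \in \mathbb{Z}^n} \hat{f}(p)\, \exp(2\pi i\, \vec{x} \cdot \vec{p}),
\end{equation*}
which is dual to the definition of $\hat{f}$ stated above.

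First I would establish the differentiation rule. Differentiating the series term by term and comparing coefficients gives
\begin{equation*}
\widehat{\frac{\partial f}{\partial x_j}}(p) = 2\pi i\, p_j\, \hat{f}(p).
\end{equation*}
Since $\hat{f}$ lies in $S(\mathbb{Z}^n)$ and decays faster than any inverse power, multiplying by $p_j$ keeps the sequence in $S(\mathbb{Z}^n)$, so the term-by-term differentiation is justified.

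Second, I would apply the convolution theorem to products of derivatives. Writing $u_j = \partial f/\partial x_j$ and $v_k = \partial g/\partial x_k$, the preceding theorem gives $\widehat{u_j v_k}(p) = \sum_{q} \hat{u_j}(q)\, \hat{v_k}(p-q)$; substituting the differentiation rule yields
\begin{equation*}
\widehat{u_j v_k}(p) = (2\pi i)^2 \sum_{q \in \mathbb{Z}^n} q_j\, (p-q)_k\, \hat{f}(q)\, \hat{g}(p-q).
\end{equation*}

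Finally, I would assemble the Poisson bracket $\{f,g\} = \sum_{j,k}\theta_{jk}\, u_j v_k$, transform both sides, interchange the finite sum over $j,k$ with the sum over $q$, and collect $\sum_{j,k}\theta_{jk}\, q_j (p-q)_k = \gamma(q, p-q)$. Using $(2\pi i)^2 = -4\pi^2$ then delivers the claimed identity. I expect the only genuinely delicate point to be the justification of the interchanges — of differentiation with the infinite series, and of the order of summation — but each is controlled by the rapid decay of Schwartz sequences, which renders every sum in sight absolutely convergent. Once the two transform rules are in hand the remaining manipulation is purely formal.
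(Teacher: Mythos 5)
Your proposal is correct, and the verification is sound: the differentiation rule $\widehat{\partial f/\partial x_j}(p)=2\pi i\,p_j\hat f(p)$, the convolution theorem from the preceding statement, and the collection of the finite sum $\sum_{j,k}\theta_{jk}\,q_j(p-q)_k=\gamma(q,p-q)$ together with $(2\pi i)^2=-4\pi^2$ give exactly the claimed formula, and the rapid decay of Schwartz sequences justifies every interchange. Note that the paper itself states this theorem without any proof, quoting it as a known result from Rieffel's deformation-quantization framework, so your argument supplies the missing verification; it is the standard one, and there is nothing in the paper to compare it against.
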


\begin{Def}[Star product]
\ \\
For $\hslash\in\mathbb{R}$, we define a family of \emph{bidifferential operator} $\sigma_\hslash$ on $\mathbb{Z}^n$ by
\begin{equation}
\times_\hslash:(p, q)\mapsto\exp(-\pi{i}\hslash\gamma(p, q)),
\end{equation}
and then the set
\begin{equation}
(\hat{f}*_\hslash\hat{g})(p)=\sum_{q\in\mathbb{Z}^n}\hat{f}(q)\hat{g}(p-q)(q\times_\hslash(p-q)).
\end{equation}
\end{Def}

\begin{Def}[Norm]
\ \\
Then for each $\hslash$, we can define the norm $||\ ||_\hslash$ on $S(\mathbb{Z}^n)$ to be the operator norm for the action of $S(\mathbb{Z}^N)$ on $l^2(\mathbb{Z}^n)$, i.e.,
\begin{equation}
||\hat{f}||_\hslash=\hat{f}*_\hslash\hat{f}.
\end{equation}
for all $f\in{C^\infty}({\mathbb{T}}^n)$.
\end{Def}

\begin{Def}[Involution]
\ \\
The \emph{Involution} on $S(\mathbb{Z}^n)$, independent of $\hslash$, is define by the complex conjugation on $C^\infty(\mathbb{T})^n$ such that,
\begin{equation}
(\hat{f})^*(p)=\widehat{(\overline{f})}(-p),
\end{equation}
for all $f\in{C^\infty}({\mathbb{T}}^n)$, $p\in\mathbb{Z}^n$ and $\overline{f}$ is the complex conjugation of $f$.
\end{Def}

\begin{Def}
\ \\
By pulling back from $S(\mathbb{Z}^n)$ the product $*_\hslash$, the involution $^*$ and the induced norm $||\ ||_\hslash$ with the inverse Fourier transformation, we have the star product, involution, and norm, defined on $C^\infty(\mathbb{T}^n)$ but denoted by the same symbol for convenience. Then define $C_\hslash$ to be the set $C^\infty(\mathbb{T}^n)$ equipped with $*_\hslash$, $^*$ and ${||\ ||}_\hslash$.
\end{Def}

\begin{Thm}
\ \\
The completion of $C_\hslash$ form a continuous field of $C^*$-algebras and
\begin{equation*}
\lim_{\hslash\to0}||(f*_\hslash{g}-g*_\hslash{f})/i\hslash-\{f, g\}||_\hslash=0,
\end{equation*}
for all ${f, g}\in{C}^\infty(\mathbb{T}^n)$.
\end{Thm}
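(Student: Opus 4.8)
The plan is to treat the two assertions separately: first that each completion is a genuine $C^*$-algebra and that the family is continuous in $\hslash$, and second the strict-quantization limit. Throughout I would pass to the Fourier picture, replacing $C^\infty(\mathbb{T}^n)$ by $S(\mathbb{Z}^n)$ via Theorem 4.1, so that $*_\hslash$ becomes the twisted convolution
\[
(\hat f *_\hslash \hat g)(p)=\sum_{q\in\mathbb{Z}^n}\hat f(q)\,\hat g(p-q)\,e^{-\pi i\hslash\gamma(q,\,p-q)},
\]
with the involution $(\hat f)^{*}(p)=\widehat{\overline f}(-p)$ and the norm $||\cdot||_\hslash$ as defined in Section 4. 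All of the work then happens as estimates on rapidly decaying sequences in $S(\mathbb{Z}^n)$.

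For the $C^*$-property, the essential point is that $||\cdot||_\hslash$ is \emph{by definition} the operator norm of the left regular representation of $S(\mathbb{Z}^n)$ on $l^2(\mathbb{Z}^n)$. Since the bounded operators on a Hilbert space form a $C^*$-algebra and this representation is a faithful $*$-homomorphism for $*_\hslash$ and the given involution, the identity $||\hat f^{*}*_\hslash\hat f||_\hslash=||\hat f||_\hslash^{2}$ is automatically inherited; hence each $C_\hslash$ is a pre-$C^*$-algebra and its completion $A_\hslash$ is a genuine $C^*$-algebra. To promote $\{A_\hslash\}$ to a continuous field I would fix $\hat f\in S(\mathbb{Z}^n)$ and show $\hslash\mapsto||\hat f||_\hslash$ is continuous: the only $\hslash$-dependence is through the unimodular phase $e^{-\pi i\hslash\gamma}$, so for $\hslash'\to\hslash$ the two convolution operators differ by twisted convolution with kernel $\hat f(q)\big(e^{-\pi i\hslash\gamma}-e^{-\pi i\hslash'\gamma}\big)$, whose $l^1$-norm tends to $0$ by dominated convergence (dominated by $2|\hat f(q)|\in l^1$). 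This yields continuity of the norm and hence the continuous-field structure.

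For the limit I would compute the commutator in the Fourier picture. Re-indexing the second convolution by $q\mapsto p-q$ gives
\[
(\hat f *_\hslash \hat g - \hat g *_\hslash \hat f)(p)=\sum_{q}\hat f(q)\,\hat g(p-q)\left(e^{-\pi i\hslash\gamma(q,p-q)}-e^{-\pi i\hslash\gamma(p-q,q)}\right).
\]
Because $\gamma(p,q)=\sum_{j,k}\theta_{jk}p_jq_k$ with $\theta$ skew-symmetric, relabelling indices gives $\gamma(p-q,q)=-\gamma(q,p-q)$, so the bracket collapses to $-2i\sin(\pi\hslash\gamma(q,p-q))$. Dividing by $i\hslash$ and using the elementary limit $\sin(\pi\hslash\gamma)/\hslash\to\pi\gamma$ as $\hslash\to0$ reproduces the kernel $\gamma(q,p-q)\hat f(q)\hat g(p-q)$ of $\widehat{\{f,g\}}$ from Theorem 4.2, once the normalisation constant in the phase of $*_\hslash$ is matched to that theorem. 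The quantity to be bounded therefore has Fourier kernel $\hat f(q)\hat g(p-q)R_\hslash(q,p-q)$, where by Taylor's theorem the remainder satisfies the pointwise bound
\[
|R_\hslash(q,p-q)|\le C\,\hslash^{2}\,|\gamma(q,p-q)|^{3}
\]
for a constant $C$ independent of $\hslash$.

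The main obstacle is that convergence is demanded in the $\hslash$-dependent operator norm $||\cdot||_\hslash$, not in a fixed norm, so pointwise decay of the kernel does not by itself suffice. I would overcome this by dominating $||\cdot||_\hslash$ uniformly in $\hslash$: since the representation acts by twisted convolution on $l^2(\mathbb{Z}^n)$ and the twisting phase has modulus one, Young's inequality gives $||\hat h||_\hslash\le||\hat h||_{l^1(\mathbb{Z}^n)}$ for every $\hslash$. Applying this to the error kernel and combining the Taylor bound with the rapid decay of $\hat f,\hat g\in S(\mathbb{Z}^n)$ — which makes $\sum_{p}\sum_{q}|\hat f(q)|\,|\hat g(p-q)|\,|\gamma(q,p-q)|^{3}$ convergent, as $|\gamma(q,p-q)|^{3}$ grows only polynomially — bounds the $l^1$-norm of the error by $C'\hslash^{2}\to0$. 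This uniform $l^1$-domination is exactly what converts the pointwise first-order estimate into norm convergence; verifying that the limit may be taken inside the doubly infinite convolution sum and that the $\hslash$-independent $l^1$ and Schwartz-seminorm bounds hold on all of $S(\mathbb{Z}^n)$ is the one genuinely delicate point, and is the reason Rieffel's continuous-field machinery is invoked rather than a bare termwise limit.
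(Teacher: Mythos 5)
First, a framing point: the paper itself contains \emph{no} proof of this theorem --- it is stated and immediately followed by the remark about $C_0$, i.e.\ it is quoted from Rieffel [R1] without argument --- so your proposal has to stand on its own merits. The second half of your proposal, the commutator limit, is essentially sound and is the standard argument: after passing to $S(\mathbb{Z}^n)$, skew-symmetry of $\theta$ gives $\gamma(p-q,q)=-\gamma(q,p-q)$, the commutator kernel collapses to $-2i\sin\bigl(\pi\hslash\gamma(q,p-q)\bigr)$, and your key step --- the uniform domination $\|h\|_\hslash\le\|h\|_{l^1(\mathbb{Z}^n)}$ for a fixed sequence $h$, which holds because the operator is a sum $\sum_q h(q)T^{(q)}_\hslash$ of unitaries (shift composed with multiplication by a unimodular phase) --- is exactly what converts the Taylor estimate $|\sin x - x|\le |x|^3/6$ plus rapid decay of $\hat f,\hat g$ into an $O(\hslash^2)$ bound valid in every $\|\cdot\|_\hslash$ simultaneously. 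One thing you should not wave off, though: with the paper's stated phase $e^{-\pi i\hslash\gamma}$ and the kernel $-4\pi^2\gamma$ of Theorem 4.2, the limit you compute is $\tfrac{1}{2\pi}\{f,g\}$, not $\{f,g\}$. The paper's two conventions are mutually inconsistent (Rieffel's bicharacter is $e^{-2\pi^2 i\hslash\gamma}$ in these units), so ``matching the normalisation'' is a correction to the statement, not a formality.

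The genuine gap is in your first half, the continuous-field assertion. You argue that $\hslash\mapsto\|\hat f\|_\hslash$ is continuous because $L^{\hslash}_{\hat f}-L^{\hslash'}_{\hat f}$ is ``twisted convolution with kernel $\hat f(q)\bigl(e^{-\pi i\hslash\gamma}-e^{-\pi i\hslash'\gamma}\bigr)$'' of small $l^1$-norm. But $\gamma=\gamma(q,p-q)$ depends on the output index $p$, not only on $q$, so this difference is \emph{not} convolution by any fixed $l^1$ sequence, and the relevant Schur bound $\sup_p\sum_q|\hat f(q)|\,\bigl|e^{-\pi i\hslash\gamma(q,p-q)}-e^{-\pi i\hslash'\gamma(q,p-q)}\bigr|$ does not tend to $0$: since $|\gamma(q,p-q)|$ grows linearly in $|p|$, however small $\hslash-\hslash'$ is, the phase difference has modulus near $2$ for large $p$. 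Concretely, for $n=2$ and $\hat f=\delta_{(0,1)}$ (the generator $V$), one computes $\bigl((L^{\hslash}_{\hat f}-L^{\hslash'}_{\hat f})\xi\bigr)(p)=\xi(p-e_2)\bigl(e^{\pi i\hslash\theta p_1}-e^{\pi i\hslash'\theta p_1}\bigr)$, whose operator norm is $\sup_{p_1\in\mathbb{Z}}\bigl|e^{\pi i(\hslash-\hslash')\theta p_1}-1\bigr|$, equal to $2$ whenever $(\hslash-\hslash')\theta$ is irrational; so $\hslash\mapsto L^{\hslash}_{\hat f}$ is nowhere norm-continuous even though $\|V\|_\hslash\equiv 1$ is perfectly continuous. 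Continuity of the numerical field $\hslash\mapsto\|a\|_\hslash$ therefore cannot be deduced from norm-continuity of the $l^2$-representatives (which is false); it is a genuine theorem, proved by Rieffel through separate upper- and lower-semicontinuity arguments in his continuous-field framework for group $C^*$-algebras with cocycles. Your closing appeal to ``Rieffel's continuous-field machinery'' is circular at precisely this point: that machinery \emph{is} the first assertion of the theorem you set out to prove.
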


\noindent Hence we have $C_0$ is just $C^\infty(\mathbb{T}^n)$ with the ordinary pointwise product. By definition, we can see $C_\hslash$ forms a deformation quantization of $C^\infty(\mathbb{T}^n)$ in the direction of the Poisson Bracket respect to $\theta$.

\begin{Def}[Quantized Tori]
\ \\
Let $T^n_\theta$ denote the algebra respect to $C_1$, we call it as a \emph{quantized }$n$-\emph{torus} or a \emph{non-commutative} $n$-\emph{torus}. And let $\overline{T^n_\theta}$ denote the norm completion of $T^n_\theta$, which is a non-commutative $C^*$-algebra.
\end{Def}

\begin{Def}[Translations on Quantized Tori]
\ \\
The \emph{translation} on $\mathbb{T}^n$ forms an action on $C^\infty(\mathbb{T}^n)$, which is also an action by a continuous $C^*$-algebra automorphism on $T^n_\theta$ with respect to $*_\hslash$. Hence the translation on $\mathbb{T}^n$ gives a dual action $\lambda$ on $T^n_\theta$. On the Schwartz space $S(\mathbb{Z}^n)$, this action is given by:
\begin{equation*}
(\lambda_{t}(\hat{f}))(p)=\text{exp}(2\pi{i}t\cdot{p})\hat{f}(p),
\end{equation*}
for all $t\in\mathbb{T}^n$, $p\in\mathbb{Z}^n$, and $f\in{T^n_\theta}$ and $\lambda_{t}$ denote the translation on $\mathbb{T}^n$ given by $t$.
\end{Def}

\begin{Def}[Unitary Generators of Quantized Tori]
\ \\
For each $p\in\mathbb{Z}^n$, the function
\begin{align}
\mu: t&\mapsto{e}^{2\pi{i}t\cdot{p}}
\end{align}
correspond to a \emph{unitary operator} $U_{p}$ in $\overline{T}^n_\theta$. And $\vec{p}\mapsto{U_{\vec{p}}}$ is then a unitary representation of $\mathbb{Z}^n$. Thus $\overline{T}^n_\theta$ can be considered as the $C^*$-algebra generated by these unitary representations. Let $U_1,\dots,U_n$ denote the unitary operators associate with the standard basis of $\mathbb{T}^n$, then these unitary operators generate $\overline{T}^n_\theta$ and satisfy
\begin{equation}
{U_k}{U_j}=e^{2\pi{i}\theta_{jk}}{U_j}{U_k}
\end{equation}
for any $j, k\in\{1,\dots,n\}$.
\end{Def}

\subsection{Topological structure of Quantized Tori}

To study the non-commutative geometry of quantized tori, we have to construct projective modules on $T^n_\theta$. In the context of commutative geometry, Serre-Swan Theorem provides an bijective correspondence between the sections of complex smooth vector bundles and finitely generated projective modules of the $C^*$-algebra of continuous functions over the space. Since the non-commutative tori $T^n_\theta$ is constructed as a space of functions with a non-commutative $*$ product, then we can consider the projective modules of $T^n_\theta$ as an non-commutative analogue to the (space of sections of) vector bundles.

\begin{Def}[Projective Modules]
\ \\
The \emph{projective} $T^n_\theta$-\emph{module} $\Xi$ is a space such that
\begin{equation}
\beta(f)\xi\in\Xi
\end{equation}
for $\beta$ is a linear action on $T^n_\theta$, $f\in{T}^n_\theta$, $\xi\in\Xi$and the product is taken in pointwise multiplication. In addition
it is given by the form
\begin{equation}
\Xi=p\otimes^m{T^n_\theta}
\end{equation}
for $p\in{{Mat}_m(T^n_\theta)}$.
\end{Def}

\noindent Now we would introduce the general form of the projective modules on quantized tori, which is a result given by Riffel in [R2], and is cited and explained in section 5.6 of [KS].

\begin{Thm}[Projective Modules on Quantized Tori]
\ \\
Given a quantized tous $T^n_\theta$. The \emph{projective} $A_\theta$-\emph{modules} are given by
\begin{equation}
\Xi\cong{S}(\mathbb{R}^p\times\mathbb{Z}^q\times{F})
\end{equation}
for all $p, q\in\mathbb{Z}^n$ s.t. $2p+q=n$. Here $S(G)$ denote the Schwartz space over $G$ and $F$ is a finite group. In particular, the trivial bundle $E$ over $T^n_\theta$ is given by $\Gamma(E)\cong{S(\mathbb{Z}^n)}$.
\label{module}
\end{Thm}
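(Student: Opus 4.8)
The statement is Rieffel's classification of finitely generated projective modules over the higher-dimensional noncommutative torus, so the plan is to prove it in two halves. First an \emph{existence} half, in which I construct a projective $T^n_\theta$-module of the asserted form $S(\mathbb{R}^p\times\mathbb{Z}^q\times F)$ for each admissible triple $(p,q,F)$ with $2p+q=n$; then an \emph{exhaustion} half, in which I show that every finitely generated projective module is isomorphic to one of these. The unifying tool is the Heisenberg (Weyl) representation together with the K-theory of $T^n_\theta$.

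For the existence half, I would fix $G=\mathbb{R}^p\times\mathbb{Z}^q\times F$ and form the self-dual locally compact abelian group $M=G\times\hat{G}$, on which there is a canonical Heisenberg projective representation on $L^2(G)$: for $x\in G$ a translation $T_x$ and for $s\in\hat{G}$ a modulation $E_s$, satisfying $E_sT_x=\langle x,s\rangle T_xE_s$. This action restricts to the dense subspace of Schwartz functions $S(G)\subset L^2(G)$. Next I would choose an embedding $\Phi:\mathbb{Z}^n\hookrightarrow M$ as a cocompact lattice --- which is possible precisely because the topological dimension of $M$, namely $2p+q$, equals $n$, so the constraint $2p+q=n$ is exactly the dimension-matching condition --- with commutator cocycle reproducing the phases $e^{2\pi i\theta_{jk}}$. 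Setting $W_{\Phi(m)}$ to be the corresponding Weyl operator, the generators $W_{\Phi(e_1)},\dots,W_{\Phi(e_n)}$ then satisfy the defining relation $U_kU_j=e^{2\pi i\theta_{jk}}U_jU_k$ of $T^n_\theta$, turning $S(G)$ into a $T^n_\theta$-module. Finally I would verify finite generation and projectivity following Rieffel: produce the commutant torus $T^n_{\theta'}$ acting on the other side, build a $T^n_\theta$-valued inner product making $S(G)$ an imprimitivity bimodule, and from a finite frame extract an idempotent $p\in\mathrm{Mat}_m(T^n_\theta)$ with $S(G)\cong p\otimes^m T^n_\theta$, which is exactly the projective-module form demanded by the module definition.

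For the exhaustion half, I would pass to K-theory. Iterating the Pimsner--Voiculescu sequence (equivalently, Connes' computation) gives $K_0(T^n_\theta)\cong\mathbb{Z}^{2^{n-1}}$, and every finitely generated projective module is classified up to stable isomorphism by its class there. I would then check that the Heisenberg modules $S(\mathbb{R}^p\times\mathbb{Z}^q\times F)$ realize a generating set of these classes, with $F$ and the integers $p,q$ encoding the torsion and rank data. To upgrade stable isomorphism to genuine isomorphism I would invoke Rieffel's cancellation theorem, which applies because the stable rank of $T^n_\theta$ is low enough (for generic $\theta$) to force cancellation in the relevant ranks. The trivial-bundle assertion $\Gamma(E)\cong S(\mathbb{Z}^n)$ is the degenerate case $p=0,\ q=n,\ F=\{e\}$: indeed $T^n_\theta$ was realized in the previous subsection as $S(\mathbb{Z}^n)$ with the twisted product, so the rank-one free module is $S(\mathbb{Z}^n)$.

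The hard part will be the exhaustion direction, not the construction. Writing down a module of the stated form is essentially bookkeeping once the Weyl operators are in place, but proving that \emph{nothing else occurs} requires both the full K-theory computation of $T^n_\theta$ and a cancellation result to rule out nonisomorphic modules sharing a single $K_0$-class. For irrational or otherwise generic $\theta$ this machinery is available, but the genericity hypotheses implicit in the phrase ``$A_\theta$-module'' must be tracked carefully, and matching the discrete invariants $(p,q,F)$ to a prescribed $K_0$-class is the genuinely delicate step.
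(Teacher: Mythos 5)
There is nothing in the paper to compare your argument against: the author does not prove this theorem at all. Immediately before the statement the paper says it is a result of Rieffel [R2], ``cited and explained in section 5.6 of [KS]'', and the theorem is followed directly by the next subsection with no proof environment. So your proposal is not an alternative route to the paper's proof --- it is a reconstruction of the proof in the paper's \emph{source}, and as a sketch of Rieffel's argument it has the right architecture: Heisenberg modules $S(G)$ for $G=\mathbb{R}^p\times\mathbb{Z}^q\times F$ built from a cocompact embedding of $\mathbb{Z}^n$ into $G\times\hat{G}$ (your rank count identifying $2p+q=n$ as the cocompactness condition is exactly right), the imprimitivity bimodule over the commutant torus to obtain finite generation and projectivity via a module frame, Pimsner--Voiculescu giving $K_0(T^n_\theta)\cong\mathbb{Z}^{2^{n-1}}$, and a cancellation theorem to pass from stable isomorphism to isomorphism.

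Two caveats you flag deserve emphasis, because the paper's statement suppresses both. First, for a fixed $\theta$ not every triple $(p,q,F)$ with $2p+q=n$ actually occurs: the lattice embedding must reproduce the cocycle $e^{2\pi i\theta_{jk}}$, so the supply of Heisenberg modules depends on $\theta$, and the theorem must be read as ``every finitely generated projective module is of this form,'' not the converse. Second, the exhaustion half genuinely requires a non-degeneracy or genericity hypothesis on $\theta$ that the paper omits: in the extreme rational case $\theta=0$, where $T^n_\theta$ is just $C^\infty(\mathbb{T}^n)$ and Serre--Swan applies, a direct sum such as a nontrivial line bundle plus a trivial one gives a projective module isomorphic to no single $S(\mathbb{R}^p\times\mathbb{Z}^q\times F)$, so the statement as printed fails there. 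Your sketch is thus more careful than the theorem it is meant to prove; completing it would amount to reproving [R2], which is well beyond anything attempted in this paper.
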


\subsection{Smooth structure on Quantized Tori}

Recall that the \emph{translation} on $\mathbb{T}^n$ forms an dual action $\lambda$ on $T^n_\theta$, which is given by:
\begin{equation*}
(\lambda_{t}(\hat{f}))(p)=\text{exp}(2\pi{i}t\cdot{p})\hat{f}(p).
\end{equation*}
We can then develop the concept of differentiation on quantized tori by considering the infinitesimal generator of the translation on $\overline{T}^n_\theta$.

\begin{Def}[Directional Differentiation on Quantized Tori]
\ \\
Let $\delta_k$ denote the differentiation on $T^n_\theta$ in the $k^{th}$ direction of $\mathbb{T}^n$, which is given by:
\begin{equation}
(\delta_k(\hat{f}))(p)=2\pi{i}p_k\hat{f}(p),
\end{equation}
where $f\in{T^n_\theta}$ and $p\in\mathbb{T}^n$.\\
\\
Each $\delta_k$ is a $*$-derivation of $\mathbb{T}$, which satisfies:
\begin{gather}
(\delta_k(a))^*=\delta_k(a^*)\\
\delta_k(ab)=\delta_k(a)b+a\delta_k(b)
\end{gather}
for all $a, b\in{T^n_\theta}$.
\end{Def}

\begin{Def}[Derivations on Quantized Tori]
\ \\
Let $T^n_\theta$ denote a non-commutative n-tori. Then let $L$ denote all the derivatives on $T^n_\theta$, i.e.,
\begin{equation}
L=\{\delta|\delta:T^n_\theta\to{T^n_\theta}\ \text{is linear and}\ \delta(\varphi\psi)=\delta(\varphi)\psi+\psi\delta(\varphi)\}
\end{equation}
and its complexification is defined by $L\oplus{i}L$.
\end{Def}

\begin{Cor}
\ \\
The derivations $L$ of a non-commutative torus $T^n_\theta$ is isomorphic to $\mathbb{R}^n$ in the sense of Lie algebra. Hence we have $L\oplus{i}L\cong\mathbb{C}^n$.
\end{Cor}

\section{Quantized Theta vectors}

In this section, we introduce the \emph{theta vectors} defined by Schwarz in [S] as holomorphic vectors in projective modules of quantum tori. Following the procedure in [E], we start with the holomorphic condition on quantized tori and then define Schwarz's theta vector as a holomorphic vector in $S(\mathbb{R})$. We would start from complex structure on non-commutative tori, then we study the connections and holomorphic conditions for the projective modules, and finally we define a theta vector on non-commutative tori.

\subsection{Complex structure on Quantized Tori}

In this subsection, we give the definitions of complex structure of non-commutative manifold which actually are directly induced by the definitions on commutative complex manifolds.

\begin{Def}[Complexification of Vector Spaces]
\ \\
A \emph{complexification} of a real vector space $V$ is given by
\begin{equation}
V\otimes\mathbb{C}
\end{equation}
where $\otimes$ is the tensor product over real spaces, or equivalent
\begin{equation}
V\oplus{i}V
\end{equation}
\end{Def}

\noindent Given $L$ is a Lie algebra isomorphic to $\mathbb{R}^n$. As a vector space, its complixification is given by
\begin{equation}
L\oplus{i}L\cong\mathbb{C}^n.
\end{equation}
However, if we view $L\cong\mathbb{R}^n$ as the derivatives of $T^n_\theta$, then the complex structure of $L$ must be modified so that the complex deriavatives on $T^n_\theta$ forms a subspace of $L\oplus{iL}$ isomorphic to $\mathbb{C}^m$ for $n=2m$ since every complex directional derivative is give by two real derivatives.

\begin{Def}[Complexification of Manifolds]
\ \\
For $L$ denote the (real) derivatives of $T^n_\theta$, we separate the the directions of the derivatives in two collections so that
\begin{align}
L&=\text{span}\{\delta_1,\dots,\delta_n\}\\
&=\mathbb{R}(\frac{\partial}{\partial{x_j}}, \frac{\partial}{\partial{y_k}}),
\end{align}
for $j, k\in$\{1,\dots, m\}. The \emph{complex structure} on $L$ is then given by
\begin{align}
L\oplus{i}L&=\mathbb{C}(\frac{\partial}{\partial{x_j}}, \frac{\partial}{\partial{y_k}})\\
&=\mathbb{C}(\frac{\partial}{\partial{z_k}}, \frac{\partial}{\partial{\overline{z}_k}})\\
&=L^{1, 0}\oplus{L}^{0, 1}
\end{align}
where
\begin{gather}
\frac{\partial}{\partial{z_k}}=\frac{1}{2}(\frac{\partial}{\partial{x_k}}-\sqrt{-1}\frac{\partial}{\partial{y_k}}),\\
\frac{\partial}{\partial{\overline{z}_k}}=\frac{1}{2}(\frac{\partial}{\partial{x_k}}+\sqrt{-1}\frac{\partial}{\partial{y_k}}).
\end{gather}
and the bases of the two complex conjugate subspace $L^{1, 0}$ and $L^{0, 1}$, $\{\delta_1,\dots,\delta_n\}$ and $\{\tilde{\delta}_1,\dots,\tilde{\delta}_n\}$ respectively. And $L^{1, 0}$ is called the holomorphic bundle and $L^{0, 1}$ is called the antiholomorphic bundle.
\end{Def}

\begin{Thm}[Coordinate Transformation]
\ \\
By construction there exists a complex $n\times{n}$ matrix $h$ which can express $\tilde{\delta}_\alpha$ with $\delta_\alpha$ by
\begin{equation}
\tilde{\delta}_\alpha=h_{\alpha}^\beta\delta_\beta.
\end{equation}
and we can write it as
\begin{equation}
h=
\begin{bmatrix}
1&0\\
0&T
\end{bmatrix}
\end{equation}
where $T$ is an $m\times{m}$ invertible matrix for $n=2m$.
\end{Thm}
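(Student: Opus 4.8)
The plan is to read the statement as a change-of-frame computation inside the finite-dimensional complex vector space $L\oplus iL\cong\mathbb{C}^n$ and then to extract the block shape of the transition matrix from the way the complex structure was defined. First I would observe that the existence of $h$ is pure linear algebra: the preceding Definition of Complexification of Manifolds exhibits $\{\delta_\beta\}_{\beta=1}^{n}$ as a frame for $L\oplus iL$ (a real basis of $L$ is automatically a complex basis of its complexification), so every vector of the conjugate frame $\{\tilde\delta_\alpha\}_{\alpha=1}^{n}$ has a unique expansion $\tilde\delta_\alpha=h_\alpha^\beta\delta_\beta$, and the coefficients of these expansions are by definition the entries of a complex $n\times n$ matrix $h$. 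No convergence or analytic input is needed here; the content of the theorem is entirely in the \emph{shape} of $h$.

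Second, and this is where the real work lies, I would establish the block-diagonal normalization $h=\begin{bmatrix}1&0\\0&T\end{bmatrix}$. The idea is that the complex structure was built (see the passage immediately preceding this theorem) by splitting the $n=2m$ real directions into two groups of $m$ and forming the combinations $\partial/\partial z_k$ and $\partial/\partial\bar z_k$; one then uses the freedom to normalize coordinates so that the first group of directions is left fixed in passing from the holomorphic to the antiholomorphic frame, while all of the nontrivial mixing of the remaining $m$ directions is absorbed into a single $m\times m$ matrix $T$. This is the non-commutative analogue of normalizing a period matrix to the form $(I\mid\tau)$ in the classical theory (the Definition of the $b$-periods together with Corollary 1.3). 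Carrying out the change of frame against this split, the identity block records the fixed directions and $T$ records the complex modulus.

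Third, I would prove that $T$ is invertible, for which two routes are available. The structural route uses nondegeneracy of the complex structure: the conjugate subspaces satisfy $L^{1,0}\cap L^{0,1}=\{0\}$ and together span $L\oplus iL$, so the full transition matrix $h$ is invertible, and with the identity block already in place this forces $\det T\neq0$. The concrete route identifies $T$ with the data of the complex modulus $\tau$; since $\tau\in\mathbb{H}^m$ has positive-definite imaginary part by Corollary 1.3, it is in particular nonsingular, and hence so is $T$.

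I expect the main obstacle to be the second step rather than the first or third. Existence of $h$ and invertibility of $T$ are routine once the framework is fixed, but justifying the precise block-diagonal normalization — specifying which $m$ directions are held fixed and verifying against the definitions of $\partial/\partial z_k,\partial/\partial\bar z_k$ that the complex structure really leaves them unmixed — requires making the coordinate choice explicit rather than invoking it abstractly. Getting that normalization right is what turns the trivial linear-algebra statement into the asserted block form.
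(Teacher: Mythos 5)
The paper never actually proves this theorem: it is asserted with the phrase ``by construction'' and then used later, so there is no argument of the paper's to match yours against, and your proposal has to stand on its own. It breaks at exactly the step you single out as the hard one. The normalization you describe in your second step---period-matrix style, leave the first $m$ directions alone and absorb the mixing into $T$---does not produce the block-diagonal matrix $\begin{bmatrix}1&0\\0&T\end{bmatrix}$. Carried out against the definitions of $\partial/\partial z_k$ and $\partial/\partial\overline{z}_k$, it produces antiholomorphic directions of the form $\tilde{\delta}_\alpha=\delta_\alpha+\sum_\beta T_{\alpha\beta}\delta_{m+\beta}$ for $\alpha=1,\dots,m$, i.e.\ the \emph{concatenated} $m\times n$ matrix $h=(\,1_m\,|\,T\,)$, whose rows mix the two blocks. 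You have implicitly conflated $(\,1_m\,|\,T\,)$ with $\operatorname{diag}(1_m,T)$, and only the concatenated form is consistent with the rest of the paper: the later Theorem (Equivalence Condition for Holomorphicity) obtains $\tau_{\alpha\beta}=T_{\alpha\beta}\sigma_\beta$ precisely because the antiholomorphic operator is $\nabla_\alpha+\sum_\beta T_{\alpha\beta}\nabla_{m+\beta}=\partial/\partial x^\alpha-2\pi i\sum_\beta\tau_{\alpha\beta}x^\beta$, whose kernel contains the theta vector $e^{\pi i x^\alpha\tau_{\alpha\beta}x^\beta}$. With the block-diagonal $h$ you set out to justify, the holomorphicity equations $\tilde{\nabla}_\alpha f=0$ would decouple into $\partial f/\partial x^\alpha=0$ and $x^\beta f=0$, whose only Schwartz-class solution is $f=0$, so no theta vectors would exist at all. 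Moreover the identity block in the upper-left corner asserts $\tilde{\delta}_\alpha=\delta_\alpha$ for $\alpha\le m$, which flatly contradicts the transversality $L^{1,0}\cap L^{0,1}=\{0\}$ that you yourself invoke in your third step. So the block shape cannot be derived from the construction, because the construction yields a different shape; the printed statement appears to be a garbled transcription of $h=(\,1_m\,|\,T\,)$ from [S] and [E].

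Two smaller points. The dimension count you gloss over in your first step is part of the same problem: $L^{0,1}$ is $m$-dimensional, so the conjugate frame has $m$ members rather than $n$, and $h$ is naturally $m\times n$, not $n\times n$. And in the correct picture your ``structural route'' to invertibility fails: transversality of $L^{1,0}$ and $L^{0,1}$ is equivalent to nonsingularity of $\bigl(\begin{smallmatrix}1&T\\1&\overline{T}\end{smallmatrix}\bigr)$, i.e.\ of $\operatorname{Im}T$, and this does not force $\det T\ne0$ (for instance $T=\bigl(\begin{smallmatrix}i&1\\-1&i\end{smallmatrix}\bigr)$ is singular although $\operatorname{Im}T=1_2$). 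Your ``concrete route''---$T$ symmetric with positive-definite imaginary part, as for a Siegel period matrix, hence nonsingular---is the argument that actually works. The repairable version of your proof is therefore a proof of the corrected statement $h=(\,1_m\,|\,T\,)$, obtained by normalizing an arbitrary antiholomorphic frame by the inverse of its (invertible) first $m\times m$ block, not a proof of the block-diagonal form printed in the theorem.
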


\begin{Def}[Complex Structure on Quantized Tori]
\ \\
Let $T^n_\theta$ denote a non-commutative torus, we say $T^n_\theta$ is equipped with a \emph{complex structure} if the Lie algebra $L\cong\mathbb{R}^n$ acting on it is equipped with a complex structure as we defined above.
\end{Def}

\noindent Given the vector bundles on a manifold, it is natural to equip it with an \emph{Hermitian metric} which enable us to have a continuous inner product on the vector bundles.

\begin{Def}[Hermitian Metrics]
\ \\
Let $\mathfrak{A}$ denote a unital $C^*$-algebra, and $\Xi$ a projective right $\mathfrak{A}$-module. A \emph{Hermitian metric} on $\Xi$ is a bi-additive $\mathfrak{A}$-value function ${<|>}_{\mathfrak{A}}$ on $\Xi\times\Xi$ which satisfy
\begin{align}
&{\langle\xi|\eta{a}\rangle}_{\mathfrak{A}}={\langle\xi|\eta\rangle}_{\mathfrak{A}}a;\\
&{\langle\xi|\eta\rangle}^*_{\mathfrak{A}}={\langle\eta|\xi\rangle}^*_{\mathfrak{A}};\\
&{\langle\xi|\xi\rangle}_{\mathfrak{A}}\ge0.
\end{align}
for all $\xi, \eta\in\Xi$ and $a\in\mathfrak{A}$. And for any linear map $\phi:\Xi\to\mathfrak{A}$ such that $\phi(\xi{a})=\phi(\xi)a$, there must exists an $\eta\in\Xi$ for which
\begin{equation}
\forall\xi\in\Xi,\ \phi(\xi)={\langle\eta|\xi\rangle}_{\mathfrak{A}}.
\end{equation}
\end{Def}

\begin{Thm}
\ \\
Any (finitely generated) projective module $\Xi$ over $T^n_\theta$ can always equip with a $T^n_\theta$-valued Hermitian metric ${\langle|\rangle}_{T^n_\theta}$. And it induces a norm over $\Xi$:
\begin{equation}
||\xi||={||{\langle\xi|\xi\rangle}_{T^n_\theta}||}^{\frac{1}{2}}
\end{equation}
for all $\xi\in\Xi$.
\end{Thm}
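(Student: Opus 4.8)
The plan is to use the explicit description of finitely generated projective modules established earlier, namely $\Xi = p\otimes^m T^n_\theta$ for some idempotent $p\in\mathrm{Mat}_m(T^n_\theta)$, and to obtain the Hermitian metric by restricting a canonical metric on the ambient free module $\otimes^m T^n_\theta$. First I would reduce to the case where $p$ is a self-adjoint idempotent, i.e.\ a projection $p=p^*=p^2$: a standard $C^*$-algebra argument produces an invertible element conjugating any idempotent to a projection, and this conjugation is a module isomorphism, so no generality is lost. Making $p$ self-adjoint is what will later turn the algebraic splitting $\otimes^m T^n_\theta = \Xi\oplus(1-p)\otimes^m T^n_\theta$ into an \emph{orthogonal} decomposition, which is exactly what the self-duality axiom needs.

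Next I would equip the free module with the canonical form
\begin{equation}
{\langle\xi|\eta\rangle}_{T^n_\theta} := \sum_{i=1}^m \xi_i^*\eta_i,
\end{equation}
for $\xi=(\xi_1,\dots,\xi_m)$, $\eta=(\eta_1,\dots,\eta_m)$, and verify the axioms of a Hermitian metric directly. Bi-additivity and right $T^n_\theta$-linearity are immediate from distributivity of the product; conjugate symmetry follows from the involution identity $(\xi_i^*\eta_i)^*=\eta_i^*\xi_i$; and positivity holds because each $\xi_i^*\xi_i$ is a positive element of the $C^*$-algebra $\overline{T^n_\theta}$, so the sum is positive. Restricting this form to $\Xi=p\otimes^m T^n_\theta$ preserves all three properties, and since the entries of $\xi,\eta$ lie in $T^n_\theta$ the values remain in $T^n_\theta$, as required.

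The delicate axiom is self-duality: every right-linear $\phi:\Xi\to T^n_\theta$ must be represented as $\phi(\xi)={\langle\eta|\xi\rangle}_{T^n_\theta}$ for some $\eta\in\Xi$. Here I would invoke that the standard free module over the unital $C^*$-algebra $\overline{T^n_\theta}$ is self-dual, and that an orthogonal direct summand of a self-dual Hilbert module is again self-dual: extending $\phi$ by zero across $(1-p)\otimes^m T^n_\theta$ yields a functional on the free module, whose representing vector is then projected back into $\Xi$ by $p$. This is where the genuine content of Hilbert $C^*$-module theory enters — it is not merely formal but rests on the Riesz-type representation theorem for the standard module — and I expect it to be the main obstacle of the proof.

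Finally, for the induced norm $||\xi||={||{\langle\xi|\xi\rangle}_{T^n_\theta}||}^{1/2}$ I would check the norm axioms. Nonnegativity is clear, and definiteness follows since ${\langle\xi|\xi\rangle}_{T^n_\theta}=0$ forces each $\xi_i^*\xi_i=0$, hence $\xi_i=0$ by the $C^*$-identity; homogeneity is immediate. The only nontrivial point is the triangle inequality, which I would derive from the Cauchy--Schwarz inequality for $C^*$-valued metrics,
\begin{equation}
{\langle\xi|\eta\rangle}_{T^n_\theta}^*{\langle\xi|\eta\rangle}_{T^n_\theta} \le ||{\langle\xi|\xi\rangle}_{T^n_\theta}||\,{\langle\eta|\eta\rangle}_{T^n_\theta}.
\end{equation}
Taking $C^*$-norms gives $||{\langle\xi|\eta\rangle}_{T^n_\theta}||\le||\xi||\,||\eta||$, and expanding ${\langle\xi+\eta|\xi+\eta\rangle}_{T^n_\theta}$ then yields $||\xi+\eta||\le||\xi||+||\eta||$, completing the verification that the metric induces a genuine norm on $\Xi$.
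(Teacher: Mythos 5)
Your proposal follows essentially the same route as the paper: realize the finitely generated projective module as a direct summand $p\otimes^m T^n_\theta$ of a free module and restrict the standard Hermitian form $\langle\xi|\eta\rangle_{T^n_\theta}=\sum_i\xi_i^*\eta_i$ to that summand. The paper states this in two sentences and omits all verification, whereas you correctly supply the details it leaves implicit (replacing the idempotent by a self-adjoint projection, checking the axioms, establishing self-duality via extension by zero and projection back by $p$, and deriving the norm from Cauchy--Schwarz), so your argument is a faithful, fleshed-out version of the paper's proof.
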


\begin{proof}
Given a projective $T^n_\theta$-module, it can be viewed as a summand of free modules. Hence the standard Hermitian metric on a free module would give a induced Hermitian module over the projective module.
\end{proof}

\subsection{Geometrical structure of Quantized Tori}
Given the differentials on the quantized tori $\delta_X$ defined, we can then study the geometric structure of the quantized tori, such as connections and curvatures.

\begin{Def}[Connections]
\ \\
Let $\Xi$ denote a projective $T^n_\theta$-module, then a \emph{connection} for $\Xi$ is a linear map
\begin{equation}
\nabla:\Gamma(\Xi)\mapsto\Gamma((L+iL)^*\otimes\Xi)
\end{equation}
satisfying the Leibnitz rule
\begin{equation}
\triangledown_\alpha(\xi{f})=(\triangledown_\alpha\xi)f+\xi\delta_\alpha(f)
\end{equation}
for all $\alpha\in{L+iL}$, $\xi\in\Xi$, and $f\in{T^n_\theta}$, and $(L+iL)^*$ is the dual space of $(L+iL)$.
\end{Def}

\begin{Def}[Curvatures]
\ \\
The \emph{curvature} $R^\nabla$ of a curvature $\nabla$ is simply defined by
\begin{equation}
{R}^\nabla(\alpha, \beta)=[\triangledown_\alpha, \triangledown_\beta].
\end{equation}
for all ${\alpha, \beta}\in{L+iL}$.
\end{Def}

\noindent By defining the connections, we can then define the complex structure on $T^n_\theta$-modules which is compatible to the geometric structure on $T^n_\theta$.

\begin{Def}[Complex Structure on $T^n_\theta$-modules]
\ \\
Let $\Xi$ denote a $T^n_\theta$-module and. Then the \emph{complex structure} on $\Xi$ can be defined by a collection of complex linear operators such that,
\begin{gather}
\tilde{\triangledown}_{\alpha}(\xi{f})=(\tilde\triangledown_{\alpha}\xi)f+\xi\tilde{\delta}_{\alpha}f,\\
[\tilde{\triangledown}_\alpha, \tilde{\triangledown}_\beta]=0,
\end{gather}
for $\xi\in\Xi$ and $f\in{T}^n_\theta$. And $\triangledown$ is
\begin{equation}
\tilde{\triangledown}_{\alpha}=h_\alpha^\beta\triangledown_\beta
\end{equation}
then the two conditions are satisfied.
\end{Def}

\subsection{Theta Vectors on Quantized Tori}

Ordinary theta functions can be considered as holomorphic sections of line bundles over (classical) tori. As an analogue in non-commutative geometry, Schwarz construct theta vectors as holomorphic elements of projective modules over non-commutative tori.

\begin{Def}[Holomorphic Vectors]
\ \\
A vector $f\in\Xi$ is \emph{holomorphic} if
\begin{equation}
\tilde{\triangledown}_\alpha{f}=0
\end{equation}
for all $\alpha\in\{1,\dots,m\}$. In another words, $f$ vanishes in the anti-holomorphic bundle.
\end{Def}

\noindent As Theorem \ref{module} stated, the space of Schwartz functions $S(\mathbb{R}^m)$ is a finitely generated projective module over the non-commutative torus $T^n_\theta$. In the following theorem, the formulae of connections of $S(\mathbb{R}^m)$ is given in [S] and [E].

\begin{Thm}
\ \\
If the projective modules over $T^n_\theta$ is given by $S(\mathbb{R}^m)$ where $n=2m$, we can choose a connection $\nabla$ on $S(\mathbb{R}^n)$ is defined by:
\begin{align}
&\triangledown_\alpha=\frac{\partial}{\partial{x^\alpha}}\\
&\triangledown_{\alpha+m}=-2\pi{i}\sigma_\alpha{x}^\alpha
\end{align}
for all $\alpha\in\{1,\dots,m\}$, $\sigma_\alpha\in\mathbb{R}$. In the equations, $x^\alpha$ denote the coordinates of $\mathbb{R}^n$ and the repeated indices are not summed.
\end{Thm}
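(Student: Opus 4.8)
The plan is to verify that the proposed operators satisfy the one defining property of a connection from the Definition of Connections, namely the Leibniz rule $\nabla_\alpha(\xi a) = (\nabla_\alpha\xi)a + \xi\,\delta_\alpha(a)$ for every $\xi \in S(\mathbb{R}^m)$ and every $a \in T^n_\theta$. Writing $\rho(a)$ for the operator on $S(\mathbb{R}^m)$ implementing the right action $\rho(a)\xi = \xi a$, this rule is equivalent to the operator identity
\begin{equation}
[\nabla_\alpha, \rho(a)] = \rho(\delta_\alpha(a)) \qquad (a \in T^n_\theta).
\end{equation}
Both sides are compatible with products in the sense that if the identity holds for $a$ and for $b$ it holds for $ab$, by the Leibniz property of $\delta_\alpha$ and of the commutator; so it suffices to verify it on the unitary generators $U_1,\dots,U_n$ (the adjoints $U_j^*$ being handled by the $*$-derivation property $\delta_\alpha(a^*)=(\delta_\alpha a)^*$ from the Definition of Directional Differentiation).

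On generators the right side is explicit. The unitary $U_j$ corresponds to the Fourier-delta at the basis vector $e_j$, so the formula $(\delta_\alpha\hat f)(p) = 2\pi i\,p_\alpha \hat f(p)$ gives $\delta_\alpha(U_j) = 2\pi i\,\delta_{\alpha j}\,U_j$. The goal thus reduces to the $n^2$ identities
\begin{equation}
[\nabla_\alpha, \rho(U_j)] = 2\pi i\,\delta_{\alpha j}\,\rho(U_j), \qquad \alpha, j \in \{1,\dots,n\}.
\end{equation}

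The next step is to make the Heisenberg module action of Theorem \ref{module} explicit. Writing $s = (s_1,\dots,s_m)$ for the coordinates on $\mathbb{R}^m$, the first $m$ generators act by multiplication, $(\,\rho(U_j)\xi\,)(s) = e^{2\pi i s_j}\,\xi(s)$ for $j\le m$, while the last $m$ act by translation, $(\,\rho(U_{j+m})\xi\,)(s) = \xi(s - \mu_j e_j)$ for $j\le m$, where the shifts $\mu_j$ are fixed by the requirement that the relations $U_kU_j = e^{2\pi i\theta_{jk}}U_jU_k$ hold. With $\nabla_\alpha = \partial/\partial s_\alpha$ and $\nabla_{\alpha+m} = -2\pi i\sigma_\alpha s_\alpha$ the four families of commutators are then elementary: derivatives commute with derivatives and with translations, and multiplications commute with multiplications, so the two off-diagonal blocks vanish exactly as the factor $\delta_{\alpha j}$ demands; the two diagonal blocks give $[\partial/\partial s_\alpha,\, e^{2\pi i s_j}] = 2\pi i\delta_{\alpha j}e^{2\pi i s_j}$ and $[-2\pi i\sigma_\alpha s_\alpha,\, \rho(U_{j+m})] = -2\pi i\sigma_\alpha\mu_\alpha\delta_{\alpha j}\,\rho(U_{j+m})$. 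The first already agrees with the required $2\pi i\delta_{\alpha j}\rho(U_j)$; the second agrees with $2\pi i\delta_{\alpha j}\rho(U_{j+m})$ precisely when $\sigma_\alpha\mu_\alpha = -1$, which is the freedom the statement leaves in the real constants $\sigma_\alpha$.

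The main obstacle is bookkeeping rather than conceptual: pinning down the Heisenberg action precisely — the shift parameters $\mu_j$ and their relation to the skew matrix $\theta$ under the splitting $n = 2m$ — since the excerpt supplies this module only abstractly through Theorem \ref{module}. Verifying that the generator relations $U_kU_j = e^{2\pi i\theta_{jk}}U_jU_k$ come out correctly, and then reading off the matching constants $\sigma_\alpha = -1/\mu_\alpha$, is the delicate part. As a by-product the same computations show the curvature is constant, since $[\nabla_\alpha,\nabla_{\beta+m}] = -2\pi i\sigma_\beta\delta_{\alpha\beta}$ is central while the remaining pairs commute; this is exactly the property that will later let the holomorphic combinations $\tilde\nabla_\alpha$ satisfy $[\tilde\nabla_\alpha,\tilde\nabla_\beta] = 0$ and hence support the theta vectors.
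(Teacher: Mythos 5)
Your verification is correct, but it cannot coincide with ``the paper's proof'' for a simple reason: the paper offers none. The theorem is stated with a pointer to [S] and [E] (and its $n=2$ specialization in Section 7.1 is attributed to [R]), so your Leibniz-rule check is precisely the argument the paper outsources to its references. The approach you take is the standard one and the bookkeeping comes out right: writing the rule as $[\nabla_\alpha,\rho(a)]=\rho(\delta_\alpha(a))$, reducing to the unitary generators by multiplicativity of both sides, and computing the four blocks of commutators forces exactly the pairing you use --- derivatives $\partial/\partial s_\alpha$ dual to the generators acting by multiplication, the operators $-2\pi i\sigma_\alpha s_\alpha$ dual to the generators acting by translation --- together with the constraint $\sigma_\alpha\mu_\alpha=-1$; the relation $U_kU_j=e^{2\pi i\theta_{jk}}U_jU_k$ then pins $\mu_\alpha=\theta_{\alpha,\alpha+m}$ (mod $1$), hence $\sigma_\alpha=-1/\theta_{\alpha,\alpha+m}$, consistent with the $2\pi i x/\theta$ appearing in Section 7.1. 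Your computation in fact catches a slip the paper never notices: in Section 7.1 the generator $U$ acts by \emph{translation} yet is paired with $\nabla_1=d/dx$, and with those conventions $[d/dx,\rho(U)]=0$ while $\rho(\delta_1(U))=2\pi i\,\rho(U)$, so the paper's index pairing violates the very Leibniz rule that defines a connection; your pairing is the one that works. Two points to tighten, neither a genuine gap: (i) extending the identity from polynomials in the $U_j$, $U_j^*$ to general elements $\sum a_{\vec n}U^{\vec n}$ with rapidly decreasing coefficients needs a one-line continuity/convergence remark, which you omit; (ii) your ansatz in which each generator acts in a single coordinate tacitly assumes $\theta$ is in the standard block form with diagonal off-diagonal block --- this is implicitly what the theorem's free constants $\sigma_\alpha$ parametrize, but for a general skew-symmetric $\theta$ the Heisenberg action and the multiplication part of the connection involve linear forms mixing the coordinates, so the assumption should be stated.
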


\begin{Thm}[Equivalence Condition for Holomorphicity]
\ \\
An equivalent condition for a vector $f$ to be holomorphic is
\begin{equation}
(\frac{\partial}{\partial{x^\alpha}}-\sum_\beta2\pi{i}\tau_{\alpha\beta})f=0.
\end{equation}
where $\tau$ is a $m\times{m}$ matrix given by $\tau_{\alpha\beta}=T_{\alpha\beta}\sigma_{\beta}$ for $\alpha, \beta\in\{1,\dots,m\}$. Recall that $
h=
\begin{bmatrix}
1&0\\
0&T
\end{bmatrix}
$
is the matrix transform the coordinates form the holomorphic bundle $L^{1, 0}$ to the antiholomorphic bundle $L^{0, 1}$.
\end{Thm}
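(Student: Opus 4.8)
The plan is to unwind the definition of a holomorphic vector against the two structural ingredients just established: the explicit connection on $S(\mathbb{R}^m)$ and the coordinate-change matrix $h$. By definition $f$ is holomorphic exactly when $\tilde{\nabla}_\alpha f = 0$ for every $\alpha \in \{1,\dots,m\}$, and by the definition of the complex structure on $T^n_\theta$-modules the anti-holomorphic covariant derivative is $\tilde{\nabla}_\alpha = h_\alpha^\beta \nabla_\beta$. Thus the entire statement reduces to evaluating $h_\alpha^\beta \nabla_\beta$ in the given frame and recognizing the resulting first-order operator.

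First I would split the contracted index $\beta$ into the two families that appear in the connection: the ``derivative'' family $\nabla_\beta = \partial/\partial x^\beta$ for $\beta \in \{1,\dots,m\}$ and the ``multiplication'' family $\nabla_{\beta+m} = -2\pi i\,\sigma_\beta x^\beta$ for $\beta \in \{1,\dots,m\}$. Reading off the block form $h = \left[\begin{smallmatrix} 1 & 0 \\ 0 & T\end{smallmatrix}\right]$, the identity block couples $\tilde{\nabla}_\alpha$ to the single derivative $\partial/\partial x^\alpha$, while the $T$-block couples it to the weighted sum $\sum_\beta T_{\alpha\beta}\nabla_{\beta+m} = -2\pi i\sum_\beta T_{\alpha\beta}\sigma_\beta x^\beta$. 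Adding the two contributions gives $\tilde{\nabla}_\alpha = \partial/\partial x^\alpha - 2\pi i\sum_\beta T_{\alpha\beta}\sigma_\beta x^\beta$. Setting $\tau_{\alpha\beta} := T_{\alpha\beta}\sigma_\beta$ collapses this to $\tilde{\nabla}_\alpha = \partial/\partial x^\alpha - 2\pi i\sum_\beta \tau_{\alpha\beta}x^\beta$, so that the holomorphicity equations $\tilde{\nabla}_\alpha f = 0$ become precisely the displayed system (reading the bare $\tau_{\alpha\beta}$ in the statement as the multiplication operator $\tau_{\alpha\beta}x^\beta$, matching the $x^\beta$ already carried by $\nabla_{\beta+m}$). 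Since $T$ is invertible, the passage $\nabla \leftrightarrow \tilde{\nabla}$ is a change of basis, so the two systems have the same solution space and the equivalence holds in both directions at once.

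The computation is short; the difficulty is entirely in the index bookkeeping around $h$. The step I expect to be the main obstacle is justifying that the $\alpha$-th anti-holomorphic covariant derivative genuinely couples the \emph{single} derivative $\partial/\partial x^\alpha$ to the $T$-weighted bundle of \emph{multiplication} operators, rather than mixing derivatives among themselves --- in other words, pinning down the ordering of the real frame $(\partial/\partial x^j, \partial/\partial y^k)$ relative to the connection families and to the blocks of $h$. This is dictated by the underlying classical convention $\partial/\partial\bar z_\alpha = \tfrac12(\partial/\partial x_\alpha + i\,\partial/\partial y_\alpha)$, where on $S(\mathbb{R}^m)$ the $y$-directions are realized as the multiplication operators $\nabla_{\beta+m}$; once that identification is fixed, the claimed form of $\tilde{\nabla}_\alpha$ and hence the equivalence follow by a one-line substitution.
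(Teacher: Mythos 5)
Your proposal is correct, but there is nothing in the paper to compare it against: the theorem is stated with no proof at all (the surrounding text makes clear it is imported from [S] and [E]), so your argument fills a genuine gap rather than paralleling an existing one. Your derivation is the intended, standard one: holomorphicity means $\tilde{\triangledown}_\alpha f=0$, the module's complex structure gives $\tilde{\triangledown}_\alpha=h_\alpha^\beta\triangledown_\beta$, and contracting against the explicit connection $\triangledown_\beta=\partial/\partial x^\beta$, $\triangledown_{\beta+m}=-2\pi i\sigma_\beta x^\beta$ yields the displayed first-order operator. The one judgment call you make --- and make correctly --- is the reading of $h$: taken literally as a $2m\times 2m$ block-diagonal matrix contracted over $\beta\in\{1,\dots,2m\}$, it would produce either pure derivative operators $\triangledown_\alpha$ (first block) or pure multiplication operators $\sum_\beta T_{\alpha\beta}\triangledown_{\beta+m}$ (second block), never the sum of the two; the antiholomorphic index must run over $\{1,\dots,m\}$ only, with the identity block feeding in $\partial/\partial x^\alpha$ and the $T$ block feeding in the $T$-weighted multiplication family, exactly as the classical convention $\partial/\partial\overline{z}_\alpha=\tfrac12\left(\partial/\partial x^\alpha+i\,\partial/\partial y^\alpha\right)$ dictates. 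Your repair of the statement's typo (restoring $x^\beta$ so the condition reads $\left(\partial/\partial x^\alpha-2\pi i\sum_\beta\tau_{\alpha\beta}x^\beta\right)f=0$) is likewise forced: the paper's own subsequent definition gives the theta vector as the Gaussian $\vartheta=e^{\pi i x^\alpha\tau_{\alpha\beta}x^\beta}$, which is annihilated precisely by your operator and not by the operator as literally printed. Your closing remark about invertibility of $T$ is harmless but unnecessary --- once $\tilde{\triangledown}_\alpha$ is identified with the displayed operator, the equivalence of the two systems of equations is an identity, not a change-of-basis argument.
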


\begin{Def}[Theta Vectors]
\ \\
The vectors satisfying the holomorphicity condition are called the \emph{theta vectors}. Up to a constant, a \emph{theta vector} can be written in the form:
\begin{equation}
\vartheta(x^1,\dots,x^m)=e^{\pi{i}x^\alpha{\tau_{\alpha\beta}}x^\beta}.
\end{equation}
where the repeated indices are not summed.
\label{theta vector}
\end{Def}

\section{KdV equations and Classical Theta Function}
The KdV equation, firstly formulated by Korteweg and Vries, governs the propagation of a solitary wave in a shallow canal. It is a very first example of a soliton equations and now it has been generalized into KdV hierarchy and KP hierarchy. Surprisingly, Riemann's theta functions are found to be able to give exact solutions of KdV and other soliton equations via algebraic geometric methods. In this section we would introduce the simplest case of KdV with one soliton and theta functions of genus 1.

\subsection{KdV equation}

\begin{Def}[KdV equation]
\ \\
The \emph{KdV equation} in the simplest form is given by
\begin{equation}
u_t=-6uu_{y}-u_{yyy}
\end{equation}
for $u(x, t)\in{C^\infty}(\mathbb{R}^2)$. In the context of the solitary wave in a shallow canal, the variable $t$ denote the time and the variable $x$ denote the space coordinate along the canal, while the function $u(x, t)$ represents of the elevation of the fluid above the bottom of the canal.
\end{Def}

\begin{Def}[Soliton Solution]
\ \\
For $u(x,t)$ is a \emph{soliton solution} of a KdV equation, it represents a solitary wave which is translation invariant. Hence it can be written in the form
\begin{equation}
u(x, t)=f(x-ct),
\end{equation}
where $c$ is some constant. Then, the KdV equation is reduced to
\begin{equation}
f'''=-6ff'+cf'.
\label{KdV}
\end{equation}
\end{Def}

\begin{Thm}
\ \\
With the condition of $u$ fast decreasing at $\pm\infty$, the solutions of (\ref{KdV}), there are solutions given by
\begin{equation}
f(x)=\frac{1}{2}c\cdot\text{sech}^2(\frac{1}{2}\sqrt{c}\ x)
\end{equation}
where $a=x-ct$ and sech$(x)=2{(e^x+e^{-x})}{^{-1}}$. Hence we have a family of soliton solutions for KdV
\begin{equation}
u(x, t)=\frac{1}{2}c\cdot\text{sech}^2(\frac{1}{2}\sqrt{c}\ (x-ct)+x_0)-\frac{1}{6}c
\label{KdV-s}
\end{equation}
\end{Thm}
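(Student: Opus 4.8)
The plan is to treat the reduced equation (\ref{KdV}) as an autonomous third-order ODE, integrate it twice while using the rapid decay of $u$ at $\pm\infty$ to eliminate the two constants of integration, and then solve the resulting first-order equation by separation of variables. The first observation I would make is that every term of (\ref{KdV}) is an exact derivative in $x$: since $6ff' = 3(f^2)'$, $cf' = (cf)'$, and $f''' = (f'')'$, a single integration gives
\[
f'' = -3f^2 + cf + C_1 .
\]
Because $f$ is assumed to decrease fast at $\pm\infty$, all of $f,f',f''$ vanish there, which forces $C_1 = 0$.

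Next I would use $f'$ as an integrating factor. Multiplying $f''=-3f^2+cf$ by $f'$ makes both sides exact, so integrating yields
\[
\tfrac12 (f')^2 = -f^3 + \tfrac{c}{2}f^2 + C_2 ,
\]
and the decay hypothesis again sends $C_2\to 0$, leaving the separable first-order equation
\[
(f')^2 = f^2\,(c-2f).
\]
At this stage I would solve by the substitution $f = \tfrac{c}{2}\,\text{sech}^2 w$, under which $c-2f = c\tanh^2 w$ and the equation collapses to $(w')^2 = \tfrac{c}{4}$, giving $w = \tfrac12\sqrt{c}\,x$ up to a translation; substituting back recovers the profile $f(x)=\tfrac12 c\,\text{sech}^2(\tfrac12\sqrt{c}\,x)$ (one may also simply verify this ansatz directly). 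Replacing $x$ by $x-ct+x_0$ and invoking the translation invariance already present in the ansatz $u(x,t)=f(x-ct)$ then produces the announced one-parameter family of traveling waves, which decays to $0$ at $\pm\infty$ as required.

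The last and most delicate point is the additive constant $-\tfrac16 c$ in (\ref{KdV-s}), which the computation above does not produce — the profile just derived tends to $0$, not to $-\tfrac16 c$. I would account for it through the Galilean invariance of KdV: if $u(x,t)$ solves $u_t+6uu_x+u_{xxx}=0$, then so does $u(x-6kt,\,t)+k$ for any constant $k$, so adding $k=-\tfrac16 c$ must be accompanied by a compensating shift of the propagation speed. The hard part will be exactly this bookkeeping, since a direct substitution of $\tfrac12 c\,\text{sech}^2(\tfrac12\sqrt{c}\,(x-ct)+x_0)-\tfrac16 c$ leaves a residual proportional to $f'$ unless the velocity in the argument is corrected to match the constant background. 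Reconciling the additive constant with both the stated fast-decay hypothesis and the travelling argument $x-ct$ is therefore the genuine obstacle; once the two integration constants have been removed by the boundary conditions, the remaining quadrature is entirely routine.
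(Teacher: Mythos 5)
Your derivation is correct, and in fact the paper offers no proof at all of this theorem --- it is stated bare and is followed immediately by the definition of the Weierstrass $\wp$ function --- so there is nothing of the paper's to compare your argument against. The two integrations (using the decay hypothesis to kill both constants), the resulting quadrature $(f')^2=f^2(c-2f)$, and the substitution $f=\frac{c}{2}\,\mathrm{sech}^2 w$ form the standard route, and they correctly establish the profile $f(x)=\frac{1}{2}c\,\mathrm{sech}^2\bigl(\frac{1}{2}\sqrt{c}\,x\bigr)$.

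Your hesitation over the constant $-\frac{1}{6}c$ is also well founded, and you should state the conclusion outright rather than leave it as an ``obstacle'': formula (\ref{KdV-s}) as printed is not a solution of the stated equation $u_t=-6uu_y-u_{yyy}$. Indeed, if $f$ satisfies the profile equation (\ref{KdV}), $f'''=-6ff'+cf'$, and one substitutes $u(x,t)=f(x-vt)+k$ into the PDE, the cubic terms cancel and what remains is the constraint
\begin{equation*}
-vf'=-(6k+c)f', \qquad\text{i.e.}\qquad v=c+6k .
\end{equation*}
With $k=0$ this gives the travelling soliton of speed $c$ (your family); with $k=-\frac{1}{6}c$ it forces $v=0$, i.e.\ the background-shifted soliton is \emph{stationary} --- exactly what your Galilean boost $u(x-6kt,t)+k$ produces, since the boost by $k=-\frac{1}{6}c$ cancels the speed $c$. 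The paper's formula takes both $k=-\frac{1}{6}c$ and $v=c$ simultaneously, which leaves precisely the nonzero residual proportional to $f'$ that you anticipated (it equals $-cf'$); moreover the constant background contradicts the theorem's own fast-decay hypothesis. So your proof establishes the first display completely, and the correct repair of the second display is either to delete the term $-\frac{1}{6}c$ or to delete the $-ct$ in the argument; no bookkeeping can make the formula hold as written.
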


\begin{Def}[Weierstrass $\wp$ function]
\ \\
The Weierstrass $\wp$ function is defined by
\begin{equation}
\wp(z)=\frac{1}{z^2}+\sum_{\omega\in\Lambda\backslash0}(\frac{1}{{(z-\omega)}^2}-\frac{1}{\omega^2})
\end{equation}
where $\Lambda$ is a lattice in $\mathbb{C}$ generated by two independent vector $\omega_1$ and $\omega_2$.
\end{Def}

\noindent Weierstrass $\wp$ function satisfies the following relation
\begin{equation}
\wp'''=12\wp\wp'
\label{wp}
\end{equation}

\noindent Since the equations (\ref{KdV}) and (\ref{wp}) looks quite similar, we seek to find a solution of (\ref{KdV}) by $\wp$, and immediately found that
\begin{equation}
u_{\Lambda}(x, t)=-{1}{2}c\cdot\wp(\frac{1}{2}\sqrt{c}(x-ct)+x_0)-\frac{1}{6}c
\end{equation}
also gives a family of soliton solutions of KdV equation.

\subsection{Hirota Bilinear relation and theta functions}

Here we introduce Hirota's method of solving KdV, which can be extended to $n$-soliton cases. We focus on the case of 1-soliton case which can be considered as a computational trick relating theta functions and KdV equations.\\

\noindent Hirota's idea is to use the substitution
\begin{equation}
u=2\frac{d^2}{dx^2}\log{f}
\label{H-trick}
\end{equation}
in the computation.\\

\noindent First substitute $u=d_x$ in the KdV, integrating with respect to $x$, one get
\begin{equation}
g_t+3g_x^2+g_{xxx}=0,
\end{equation}
then, assign $g$ with $2\frac{d}{dx}\log{f}$, the KdV is finally translated into the \emph{Hirota bilinear form}.

\begin{Def}[Hirota Bilinear Form]
\ \\
The \emph{Hirota bilinear form} of KdV equation is given by
\begin{equation}
ff_{xxxx}+3f^2_{xx}-4f_xf_{xxx}-4f_xf_{xxx}-f_tf_x+ff_{tx}=0.
\end{equation}
\end{Def}

\noindent In this framework, by setting 
\begin{equation}
f=1+exp(\alpha{x}-\alpha^3t+x_0),
\end{equation}
we can see the soliton solutions in the form of (\ref{KdV-s}). In fact (\ref{KdV-s}) can be derived when $\alpha=\sqrt{c}$ and define $u$ by (\ref{H-trick}).\\

\begin{Thm}
\ \\
According to our the definition of single variable classical theta function (Def. \ref{theta}), we have the following identity
\begin{equation}
\wp(z, \tau)=-\frac{d^2}{dz^2}\log\vartheta(z+\frac{1}{2}(1+\tau), \tau)+k
\end{equation}
by setting $\wp$ on the lattice $\Lambda$ generated by 1 and $\tau\in\mathbb{H}$. Hence,
\begin{equation}
u(x, t)=2\frac{d^2}{dx^2}\log{\vartheta}(\alpha{x}+\gamma{t}+x_0, \tau)
\end{equation}
for $\alpha, \gamma, x_0\in\mathbb{Z}$.
\label{soliton}
\end{Thm}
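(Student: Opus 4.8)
The plan is to prove the Weierstrass--theta identity first and then read off the KdV solution from it, using the Weierstrass soliton already exhibited after (\ref{wp}). For the identity, I would abbreviate $\vartheta_1(z) := \vartheta(z+\tfrac{1}{2}(1+\tau),\tau)$ and set $g(z) := -\tfrac{d^2}{dz^2}\log\vartheta_1(z)$. The whole identity is then a Liouville argument on the torus $\mathbb{C}/\Lambda$ with $\Lambda = \mathbb{Z}+\tau\mathbb{Z}$: I will show that $g$ and $\wp$ are elliptic for the \emph{same} lattice and have the \emph{same} principal part at their only pole, so that $\wp-g$ is a holomorphic elliptic function, hence constant by Liouville, and that constant is the $k$ of the statement.

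The ellipticity and pole analysis is the technical core. From the exact periodicity $\vartheta(z+1,\tau)=\vartheta(z,\tau)$ (equation (5)) the shifted function $\vartheta_1$ is itself $1$-periodic, so $g(z+1)=g(z)$. From the quasi-periodicity (equation (6)) one computes $\vartheta_1(z+\tau)=\exp\!\big(\mathrm{const}-2\pi i z\big)\vartheta_1(z)$, so $\log\vartheta_1(z+\tau)$ differs from $\log\vartheta_1(z)$ only by a term that is \emph{linear} in $z$; two $z$-derivatives annihilate it, giving $g(z+\tau)=g(z)$ and hence $\Lambda$-periodicity of $g$. For the pole, the shift by $\tfrac12(1+\tau)$ is chosen precisely because the theta function of Def.~\ref{theta} vanishes to first order at the half-period $\tfrac12(1+\tau)$, so $\vartheta_1$ has a \emph{simple} zero at $z=0$; thus $\log\vartheta_1(z)=\log z+(\text{holomorphic})$ near the origin and $g(z)=\tfrac{1}{z^2}+(\text{holomorphic})$, matching the principal part of $\wp$ exactly. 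Since $\vartheta_1$ has no other zeros in a fundamental domain, $g$ has no other poles, $\wp-g$ is bounded and entire on the torus, hence constant, which yields $\wp(z,\tau)=-\tfrac{d^2}{dz^2}\log\vartheta_1(z)+k$.

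To obtain the solution formula I would substitute this identity into the Weierstrass soliton. The text has already shown (via (\ref{wp}) and (\ref{KdV})) that $u_\Lambda(x,t)=-\tfrac{1}{2}c\,\wp\!\big(\tfrac{1}{2}\sqrt{c}(x-ct)+x_0\big)-\tfrac{1}{6}c$ solves KdV. Writing $z=\tfrac{1}{2}\sqrt{c}(x-ct)+x_0$, so that $\tfrac{d^2}{dz^2}=\tfrac{4}{c}\tfrac{d^2}{dx^2}$, the identity converts $u_\Lambda$ into $2\tfrac{d^2}{dx^2}\log\vartheta_1+(\text{const})$, i.e. $u(x,t)=2\tfrac{d^2}{dx^2}\log\vartheta(\alpha x+\gamma t+x_0,\tau)$ with $\alpha=\tfrac12\sqrt{c}$ and $\gamma=-\tfrac12 c^{3/2}$ the velocity forced by the travelling-wave form of $u_\Lambda$. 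The leftover additive constant $-\tfrac12 ck-\tfrac16 c$ is harmless: it is absorbed by the Galilean invariance $u(y,t)\mapsto u(y-6c_0 t,t)+c_0$ of KdV, which is exactly the freedom relating $\gamma$ and $\alpha$.

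The main obstacle is the ellipticity-plus-pole step of the identity: one must verify that the factor in (6) contributes only a term linear in $z$ to $\log\vartheta_1$ (so two derivatives kill it) and that the half-period shift genuinely places a \emph{simple} zero at the origin, so that the principal parts of $g$ and $\wp$ coincide rather than merely having the same pole order. Everything downstream --- the chain-rule rescaling and the constant bookkeeping --- is routine once the identity holds. As an independent check I would confirm that this same $u$ satisfies the Hirota bilinear form of Section~6.2, which is the reformulation the later quantum computation in Section~7 is designed to mirror.
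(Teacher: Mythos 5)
Your proposal is correct, but there is nothing in the paper to measure it against: Theorem \ref{soliton} is stated with no proof at all (it is followed only by a discussion paragraph), so your argument supplies exactly what the paper omits. Your route is the classical one and it is sound: $1$-periodicity of $g=-\tfrac{d^2}{dz^2}\log\vartheta_1$ from (5); $\tau$-periodicity because the quasi-period factor in (6) contributes only an affine-in-$z$ term to $\log\vartheta_1$, killed by two derivatives; a simple zero of $\vartheta_1$ at $z=0$ because the zero set of $\vartheta(\cdot,\tau)$ is $\tfrac{1}{2}(1+\tau)+\Lambda$ with simple zeros; hence $\wp-g$ is elliptic and pole-free, so constant by Liouville. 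One caution: this zero-set fact uses the standard normalization $\vartheta(z,\tau)=\sum_n e^{\pi i n^2\tau+2\pi i nz}$, which is what lines (3)--(4) of Def.~\ref{theta} encode; the first line of that definition contains a typo ($\pi i nz$ in place of $2\pi i nz$), and your argument implicitly, and correctly, works with the standard convention. Your second half is also more precise than the statement you are proving: the chain rule through $z=\tfrac{1}{2}\sqrt{c}(x-ct)+x_0$ forces $\alpha=\tfrac{1}{2}\sqrt{c}$ and $\gamma=-\tfrac{1}{2}c^{3/2}$ up to the Galilean boost $u(y,t)\mapsto u(y-6c_0t,t)+c_0$ that absorbs the leftover additive constant, so $\alpha$ and $\gamma$ are tied by a dispersion-type relation and are certainly not free integers; the theorem's ``for $\alpha,\gamma,x_0\in\mathbb{Z}$'' should be read as ``for suitable real $\alpha,\gamma$ and arbitrary $x_0$,'' and your bookkeeping establishes the correct version. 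The Galilean step also makes your proof insensitive to the exact constant in the paper's formula $u_\Lambda=-\tfrac{1}{2}c\,\wp(\cdot)-\tfrac{1}{6}c$ (whose sign on the $\tfrac{c}{6}$ term is itself off), which is a genuine robustness advantage. Finally, note that your derivation goes through the $\wp$-substitution directly and never needs the Hirota bilinear form of this subsection; that form enters only as your independent check, which is in fact the role it plays for the quantum computation in Section 7.
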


\noindent Hence, we can see, Hirota's method related Weierstrass $\wp$ function to the theta function. Since theta functions can be extended to higher dimensions by Riemann's theta function, Hirota's approach expresses KdV in terms of theta functions instead of Weistrass $\wp$ functions. Where theta functions can be generalized into Riemann's theta functions to solve KdV equations in higher dimensions.

\section{KdV equation and Theta Vector on Non-commutative 2-tori}

In this section, we would study the relationship between KdV equations and theta vectors. Here we only focused on the case of non-commutative 2-tori. First we produce the explicit formulae for the projective modules, connections, holomorphic conditions and theta vectors on 2-tori. Then we showed that quantum theta vector satisfies Hirota's bilinear form.

\subsection{Quantized 2-Tori}

In this subsection, we introduce the explicit formulae for the unitary generators, projective modules, connections and curvature in the particular case of $T^2_\theta$ which are are calculated out in [C], [R], [PS] and [KS] in detail. These results would serve as the framework for our calculation in subsection 6.5.
 
\begin{Def}[Quantized 2-Tori]
\ \\
The special case of a \emph{quantized} 2-\emph{torus} $T^2_\theta$, where $\theta\in\mathbb{R}$, is given by a deformation quantization of ordinary 2-torus $\mathbb{T}^2$. And here the $2\times2$ skew-symmetric real matrix $\theta$ is only defined by a real number. Hence for convenience, we denote the skew symmetric matrix as
\begin{equation}
\begin{bmatrix}
\ \ 0&\theta\\
-\theta&0
\end{bmatrix}
.
\end{equation}
\end{Def}

\begin{Def}[Unitary Generators of Quantized 2-tori]
\ \\
As introduced in [PS], the $C^*$-algebra $\overline{T^2_\theta}$ consists of formal linear combinations
\begin{equation}
\sum_{(n,m)\in\mathbb{Z}^2}a_{nm}U^nV^m,
\end{equation}
where the coefficient function $(n, m)\mapsto{a}_{nm}$ rapidly decreasing at infinity, and $U$, $V$ are two unitary operators in $\overline{T^2_\theta}$ such that
\begin{equation}
UV=\lambda{VU}
\end{equation}
for $\lambda=e^{2\pi{i}\theta}$.
\end{Def}

\begin{Def}[Projective Modules on $T^2_\theta$]
\ \\
According to Theorem 4.5, the (finitely generated) projective modules on $T^2_\theta$ are the trivial case $S(\mathbb{Z}^2)$ and $S(\mathbb{R})$. Where $S(R)$ is equipped with the following right action of $T^n_\theta$:
\begin{gather}
(U\xi)(p)=\xi(p+\kappa),\\
({V}\xi)(p)=\exp(2\pi{i}\tilde{\kappa}p)\xi(p)
\end{gather}
for all $\psi\in{S(\mathbb{R})}$ and $\kappa, \tilde{\kappa}\in\mathbb{R}$ such that $\kappa\tilde{\kappa}=\theta$.
\end{Def}

\begin{Def}[Derivatives on  $T^2_\theta$]
\ \\
The \emph{derivatives} are given by
\begin{equation}
L=\text{span}\{\delta_1, \delta_2\}\cong\mathbb{R}^2
\end{equation}
where $\{\delta_1, \delta_2\}$ is the standard basis of $L$ corresponding to the two directions on $\mathbb{T}^2$.
\end{Def}

\begin{Def}[Connections on $T^2_\theta$]
\ \\
Let $\Xi$ denote a $T^2_\theta$-projective module. Then a \emph{connection} $\nabla$ can be defined by its values:
\begin{align}
(\triangledown_1\xi)(x)&=(\frac{d\xi}{dx})(x)\\
(\triangledown_2\xi)(x)&=(\frac{2\pi{i}x}{\theta})\xi(x)
\end{align}
where $\xi\in\Xi$. Where this result is calculated out in [R].
\end{Def}

\begin{Def}[Curvature on $T^2_\theta$]
\ \\
The \emph{curvature} $R^\nabla$ of a connection $\nabla$ on $T^2_\theta$ is then given by:
\begin{equation}
R^\nabla(\vec{1}, \vec{2})=[\triangledown_1, \triangledown_2]=\frac{2\pi{i}}{\theta}\mathbb{I}
\end{equation}
where $\mathbb{I}$ is the identity operator on $S(\mathbb{R})$.
\end{Def}

\subsection{Theta Vector on $T^2_\theta$}

Recall Definition \ref{theta vector}, the theta vector on $T^2_\theta$ is in the form of 
\begin{equation*}
\vartheta(x_1,\dots, x_m)=e^{2\pi{i}x^\alpha\tau_{\alpha\beta}{x}^{\beta}}.
\end{equation*}
The $m\times{m}$ matrix $\tau$ given by $\tau_{\alpha\beta}={T}_{\alpha\beta}\sigma_\beta$ is degenerated to a complex number since $n=2$ and $2m=n$ where $\sigma_\beta\in\mathbb{R}$ is given by $\sigma_\beta=\theta$ and $\tau\in\mathbb{H}$ for $\mathbb{H}$ denote the upper half space of $\mathbb{C}$. Then the summation $x^\alpha{\tau}_{\alpha\beta}{x}^{\beta}$ for $\alpha\in\{x_1,\dots,x_m\}$ is degenerated to a $x\tau{x}=x^2{\tau}$. Hence we have the following definition.

\begin{Def}[Theta Vector on $T^2_\theta$]
\ \\
The \emph{theta vector} on $T^2_\theta$ is given by
\begin{equation}
\vartheta(x)=e^{2\pi{i}x^2\tau}\in{S(\mathbb{R})}
\end{equation}
where $x\in\mathbb{R}$ and $\tau\in\mathbb{H}$.
\end{Def}

\begin{Def}[Partial Derivatives of Theta Vectors]
\ \\
The partial derivatives of theta vectors along the standard basis of  $T^2_\theta$ are given by:
\begin{align}
\triangledown_{1}\vartheta&=\frac{d}{dx}e^{2\pi{i}x^2\tau}=(2\pi{i}x\tau)\vartheta(x)\\
\triangledown_{2}\vartheta&=(\frac{2\pi{i}x}{\theta})\vartheta(x)
\end{align}
\end{Def}

\subsection{Quantum KdV equations}

\begin{Def}[Quantum KdV equation on $T^2_\theta$]
\ \\
The quantum KdV equations are defined as:
\begin{equation}
\xi_t=\xi\xi_{y}+6\xi_{yyy}
\end{equation}
where $\xi$ is a smooth section over $T^2_\theta$ (i.e. $\xi\in{S(\mathbb{R}}$)), and $\xi_y=\triangledown_{i}\xi$, $\xi_t=\triangledown_{j}\xi$ for $i$, $j$ denote independent directions.
\end{Def}

\begin{Def}[Hirota Bilinear Relation]
\ \\
By the same procedure in section 5, one can show that any function satisfied the \emph{Hirota bilinear relation} would be solutions of the quantized KdV equation on $T^2_\theta$. Where the Hirota bilinear form is given by:
\begin{equation}
\eta\eta_{yyyy}+3\eta^2_{yy}-4\eta_{y}\eta_{yyy}-\eta_t\eta_y+\eta\eta_{ty}=0
\end{equation}
for $\eta\in{S(\mathbb{R})}$.
\end{Def}

\begin{Thm}[Theta Vectors on $T^2_\theta$ satisfy Hirota Bilinear Form]
Given the theta vector
\begin{equation}
\vartheta(y)=e^{2\pi{i}x^2\tau},
\end{equation}
and let
\begin{equation}
\vartheta_y=\frac{\partial}{\partial{y}}\vartheta=\triangledown_{2}\vartheta\ \text{and}\ \vartheta_t=\frac{\partial}{\partial{t}}\vartheta=\triangledown_{1}\vartheta
\end{equation}
we have,
\begin{equation}
\vartheta\vartheta_{yyyy}+3\vartheta^2_{yy}-4\vartheta_{y}\vartheta_{yyy}-\vartheta_t\vartheta_y+\vartheta\vartheta_{ty}=0.
\end{equation}
\end{Thm}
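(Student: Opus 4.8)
The plan is to exploit the special form the two connections take on $S(\mathbb{R})$: the $y$-connection $\triangledown_2$ acts simply as multiplication by the function $m(x) := \frac{2\pi i x}{\theta}$, while the $t$-connection $\triangledown_1$ is ordinary differentiation $\frac{d}{dx}$. Since a multiplication operator composes with itself trivially, every iterated $y$-derivative of $\vartheta$ is just a scalar-function multiple of $\vartheta$ itself: one has $\vartheta_y = m\vartheta$, $\vartheta_{yy} = m^2\vartheta$, $\vartheta_{yyy} = m^3\vartheta$, and $\vartheta_{yyyy} = m^4\vartheta$, all products being understood as pointwise products of Schwartz functions.

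First I would substitute these into the three pure-$y$ terms of the Hirota form. Each of $\vartheta\vartheta_{yyyy}$, $\vartheta_{yy}^2$, and $\vartheta_y\vartheta_{yyy}$ collapses to the single expression $m^4\vartheta^2$, so that
\[
\vartheta\vartheta_{yyyy} + 3\vartheta_{yy}^2 - 4\vartheta_y\vartheta_{yyy} = (1 + 3 - 4)\,m^4\vartheta^2 = 0.
\]
Thus the bulk of the identity vanishes purely by the numerical coincidence $1 + 3 - 4 = 0$, with no reference to the precise Gaussian shape of $\vartheta$.

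It then remains to treat the mixed pair $-\vartheta_t\vartheta_y + \vartheta\vartheta_{ty}$. Reading $\vartheta_{ty}$ in the usual sense $(\vartheta_t)_y = \triangledown_2\triangledown_1\vartheta$ and pulling the multiplication operator $\triangledown_2$ out front, I would write $\vartheta_{ty} = m\,\triangledown_1\vartheta = m\vartheta_t$. Then $\vartheta\vartheta_{ty} = m\vartheta\vartheta_t$, whereas $\vartheta_t\vartheta_y = \vartheta_t(m\vartheta) = m\vartheta_t\vartheta$; the two are equal, so the mixed pair also vanishes. Summing the two vanishing groups yields the theorem.

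The step I expect to be the main obstacle is conceptual rather than computational. Because the curvature is nonzero, $[\triangledown_1,\triangledown_2] = \frac{2\pi i}{\theta}\mathbb{I}$, the mixed second derivative $\vartheta_{ty}$ is genuinely order-dependent, and the identity holds only for the ordering $\triangledown_2\triangledown_1$; the opposite ordering $\triangledown_1\triangledown_2$ would leave a residual curvature term $\frac{2\pi i}{\theta}\vartheta^2$. I would therefore state this convention explicitly and note that it is the one forced by compatibility with the classical limit, where the two connections commute and the ambiguity disappears. A secondary point to record is that all products appearing in the Hirota form are ordinary pointwise products of Schwartz functions, which is precisely what licenses pulling the scalar factors $m^k$ through; once that is granted, the whole argument reduces to the two cancellations above.
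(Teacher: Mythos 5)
Your proposal is correct and follows essentially the same route as the paper's proof: substitute the explicit forms of $\triangledown_1$ (differentiation) and $\triangledown_2$ (multiplication by $\tfrac{2\pi i x}{\theta}$), collapse the pure-$y$ block via the numerical cancellation $1+3-4=0$, and cancel the mixed pair $-\vartheta_t\vartheta_y+\vartheta\vartheta_{ty}$ term against term. If anything, your write-up sharpens the paper's argument: you state explicitly the ordering convention $\vartheta_{ty}=\triangledown_2\triangledown_1\vartheta$, which the paper uses silently even though the nonzero curvature $[\triangledown_1,\triangledown_2]=\tfrac{2\pi i}{\theta}\mathbb{I}$ makes the choice essential, and you observe that the Gaussian form of $\vartheta$ is never actually invoked, so the identity in fact holds for every vector in $S(\mathbb{R})$.
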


\begin{proof}
\ \\
By applying the formulae (70, 71), we have:
\begin{equation}
\theta_y=(\frac{2\pi{i}x}{\theta})\vartheta(x)\ \text{and}\ \theta_t=(2\pi{i}x\tau)\vartheta(x),
\end{equation}
then
\begin{gather}
\vartheta\vartheta_{yyyy}=\vartheta_{yy}\vartheta_{yy}=\vartheta_{y}\vartheta_{yyy}={(\frac{2\pi{i}x}{\theta})}^4{\vartheta}^2,\\
\theta_t\theta_y=\theta\theta_{ty}=(\frac{2\pi{i}x}{\theta})^2\tau\vartheta^2,
\end{gather}
hence
\begin{equation}
\vartheta\vartheta_{yyyy}+3\vartheta^2_{yy}-4\vartheta_{y}\vartheta_{yyy}-\vartheta_t\vartheta_y+\vartheta\vartheta_{ty}=0.
\end{equation}
\end{proof}

\begin{Cor}
\ \\
Then the vector
\begin{equation}
\psi(x, t)=2\frac{d^2}{dx^2}\log{\vartheta}(ax+bt+c, \tau)
\end{equation}
for all $a, b, c\in\mathbb{R}$ forms a family of soliton equations of quantized KdV equation.
\end{Cor}

\begin{Rem}
If we alternatively let
\begin{equation}
\vartheta_y=\frac{\partial}{\partial{y}}\vartheta=\triangledown_{1}\vartheta\ \text{and}\ \vartheta_t=\frac{\partial}{\partial{t}}\vartheta=\triangledown_{2}\vartheta
\end{equation}
then we have,
\begin{gather}
\vartheta\vartheta_{yyyy}+3\vartheta^2_{yy}-4\vartheta_{y}\vartheta_{yyy}=4{(2\pi{i}\tau)}^2\vartheta^2\\
-\vartheta_t\vartheta_y+\vartheta\vartheta_{ty}=\frac{2\pi{i}}{\theta}\vartheta^2
\end{gather}
If we assume Hirota bilinear relation is satisfied, we then have
\begin{equation}
\tau^2=\frac{i\theta}{8\pi}
\end{equation}
which is unsatisfiable. Hence we can say that we've made the right choice for the directions of $t$ and $y$. 
\end{Rem}

\section{Bibliography}
\noindent [C] Alain Connes, \emph{C*-algebres et Geometrie Differentielle}, C. R. Acad. Sci. Paris 290 (1980), 599-604.\\
\noindent [M1] Yuri I. Manin, \emph{Quantized theta function}. Progress of Theoretical Physics Supplement No. 102, 1990, 219-228\\
\noindent[M2] Yuri I. Manin, \emph{Theta Functions, Quantum Tori and Heisenberg Groups}. https://arxiv.org/abs/math/0011197v1\\
\noindent[S] Albert Schwartz, \emph{Theta-functions on noncommutative tori}, Letters in Mathematical Physics 58: 81-90, 2001. https://arxiv.org/abs/math/0107186v1\\
\noindent[PS] A. Polishchuk, Albert Schwartz, \emph{Categories of Holomorphic Vector Bundles on Noncommutative Two-Tori} https://arxiv.org/pdf/math/0211262.pdf\\
\noindent [KS] Anatoly Konechny, Albert Schwartz, \emph{Introduction to M(atrix) theory and noncommutative geometry} https://arxiv.org/abs/hep-th/0012145v3\\
\noindent[P] Emma Previato, (2018) \emph{Complex algebraic geometry applied to integrable dynamics: Concrete examples and open problems.} In: Kielanowski P., Odzijewicz A., Previato E. (eds) Geometric Methods in Physics XXXV. Trends in Mathematics. Birkhäuser, pp 277-288\\
\noindent[GH] Fritz Gesztesy, Helge Holden, \emph{Soliton Equations and their Algebro-Geometric Solutions, Volume I: (1+1)- dimensional Continuous Models}, Cambridge University Press, Cambridge studies in advanced mathematics 79.\\
\noindent[R1] Marc A. Rieffel, \emph{Non-commutative Tori -- A Case Study of Non-commutative Differentiable Manifolds}, Contemporary Mathematics 105 (1990) 191-211.\\
\noindent[R2] Marc A. Rieffel, \emph{Projective Modules over Higher-dimensional Non-commutative Tori}, Can. J. Math., Vol. XL, No. 2, 1998, pp. 257-338\\
\noindent[L] N. P. Landsman, \emph{Lecture Notes on} $C^*$-\emph{algebra, Hilbert }$C^*$-\emph{modules, and Quantum Mechanics}, https://arxiv.org/abs/math-ph/9807030.\\
\noindent[E] Ee Chang-Young, Hoil Kim, \emph{Theta Vectors and Quantum Theta Functions}, Journal of Physics A: Mathematical and General, Volume 38, Number 19.

\end{document}